% This is samplepaper.tex, a sample chapter demonstrating the
% LLNCS macro package for Springer Computer Science proceedings;
% Version 2.20 of 2017/10/04

\documentclass[runningheads]{llncs}

\usepackage{stmaryrd} 
\usepackage{proof}
\usepackage{amsmath}
\usepackage{amssymb}
\usepackage{mathtools}
\usepackage{mathpartir}
\usepackage{color}
\usepackage{xstring}
\usepackage{ntabbing}
\usepackage{listings}
\usepackage{varwidth}
\usepackage{enumitem}
\usepackage{multicol}
\usepackage{lipsum}
\usepackage{booktabs}
\usepackage{subcaption}
\usepackage{fancyvrb}
\usepackage{hyperref}

\lstset{basicstyle=\ttfamily, columns=fullflexible, aboveskip=2pt, belowskip=2pt}

%% Polymorphic Nested Session Types
\newcommand{\polymorphicsessiontypes}[0]{nested session types}

\newcommand{\Polymorphicsessiontypes}[0]{Nested session types}

%Potential annotations
\newlength{\rWidth}

%% Notation
\newcommand{\m}[1]{\mathsf{#1}}
\newcommand{\mb}[1]{\mathbf{#1}}

\newenvironment{sill}{\begin{tabbing}}{\end{tabbing}}

%% Configuration

%% Modes

%% Contexts and Typing Judgment

\newcommand{\Sg}{\Sigma}

\newcommand{\D}{\Delta}
\newcommand{\G}{\Gamma}

%% Operational Semantics Predicates
\newcommand{\proc}[2]{\m{proc}(#1, #2)}
\newcommand{\msg}[2]{\m{msg}(#1, #2)}

%% Semantics
\newcommand{\step}{\mapsto}

\newcommand{\fresh}[1]{(#1 \text{ fresh})}

%% Expressions Semantics

%% Expressions

%% Proof Terms
\newcommand{\ecase}[3]{\m{case} \; #1 \; (#2 \Rightarrow #3)}

\newcommand{\erecvch}[2]{#2 \leftarrow \m{recv} \; #1}

\newcommand{\esendch}[2]{\m{send} \; #1 \; #2}

\newcommand{\ewait}[1]{\m{wait} \; #1}

\newcommand{\eclose}[1]{\m{close} \; #1}

\newcommand{\fwd}[2]{#1 \leftrightarrow #2}

\newcommand{\esendl}[2]{#1.#2}

\newcommand{\ecut}[4]{#1 \leftarrow #2 \; #3 \semi #4}

\newcommand{\procdef}[3]{#3 \leftarrow #1 \; #2}

%% Type Constructors
\newcommand{\lolli}{\multimap}
\newcommand{\tensor}{\otimes}
\newcommand{\with}{\mathbin{\binampersand}}

\newcommand{\one}{\mathbf{1}}

\newcommand{\semi}{\; ; \;}
\newcommand{\ichoiceop}{\oplus}
\newcommand{\echoiceop}{\with}
\newcommand{\ichoice}[1]{\mathord{\ichoiceop} \{ #1 \}}
\newcommand{\echoice}[1]{\mathord{\echoiceop} \{ #1 \}}

\newcommand{\mi}[1]{\mbox{\it #1}}

  % -fp was: ?
  % -fp was: !
 % -fp was: \; \tassertop
 % -fp was: \; \tassumeop

%% Functional Types

%% Types with Potential

%% Temporal Types

\newcommand{\tforall}[1]{\forall #1. \, }
\newcommand{\texists}[1]{\exists #1. \, }

\newcommand{\Next}{\raisebox{0.3ex}{$\scriptstyle\bigcirc$}}
\newcommand{\next}[1]{\Next #1}
\newcommand{\tdelay}[2]{
    \IfEqCase{#2}{%
        {1}{\next{#1}}%
        % you can add more cases here as desired
    }[{\Next^{#2} (#1)}]%
}%

%% Indices
\newcommand{\indv}[1]{[\overline{#1}]}

%% Syntactic Sugar
\newcommand{\config}{\mathcal{S}}

\newcommand{\dc}{\mathcal{D}}

%% Color

%% Smart Contracts

%% Typing Judgments for Servers and Clients

%% Sharing

%% Subtyping

\newcommand{\cons}{\mathcal{C}}
\newcommand{\vars}{\mathcal{V}}

\newcommand{\unfldsg}[2]{\m{unfold}_{#1}(#2)}
\newcommand{\unfold}[1]{\unfldsg{\Sg}{#1}}
\newcommand{\rel}{\mathcal{R}}

%% Latex
%\newtheorem{theorem}{Theorem}
%\newtheorem{definition}{Definition}
%\newtheorem{lemma}{Lemma}
%\newtheorem{cor}{Corollary}

%%Global Semantics

%% Channel typing

%% Types to Processes

%% Solidity
\definecolor{verylightgray}{rgb}{.97,.97,.97}

\lstdefinelanguage{Solidity}{
	keywords=[1]{anonymous, assembly, assert, balance, break, call, callcode, case, catch, class, constant, continue, constructor, contract, debugger, default, delegatecall, delete, do, else, emit, event, experimental, export, external, false, finally, for, function, gas, if, implements, import, in, indexed, instanceof, interface, internal, is, length, library, log0, log1, log2, log3, log4, memory, modifier, new, payable, pragma, private, protected, public, pure, push, require, return, returns, revert, selfdestruct, send, solidity, storage, struct, suicide, super, switch, then, this, throw, transfer, true, try, typeof, using, value, view, while, with, addmod, ecrecover, keccak256, mulmod, ripemd160, sha256, sha3}, % generic keywords including crypto operations
	keywordstyle=[1]\color{blue}\bfseries,
	keywords=[2]{address, bool, byte, bytes, bytes1, bytes2, bytes3, bytes4, bytes5, bytes6, bytes7, bytes8, bytes9, bytes10, bytes11, bytes12, bytes13, bytes14, bytes15, bytes16, bytes17, bytes18, bytes19, bytes20, bytes21, bytes22, bytes23, bytes24, bytes25, bytes26, bytes27, bytes28, bytes29, bytes30, bytes31, bytes32, enum, int, int8, int16, int24, int32, int40, int48, int56, int64, int72, int80, int88, int96, int104, int112, int120, int128, int136, int144, int152, int160, int168, int176, int184, int192, int200, int208, int216, int224, int232, int240, int248, int256, mapping, string, uint, uint8, uint16, uint24, uint32, uint40, uint48, uint56, uint64, uint72, uint80, uint88, uint96, uint104, uint112, uint120, uint128, uint136, uint144, uint152, uint160, uint168, uint176, uint184, uint192, uint200, uint208, uint216, uint224, uint232, uint240, uint248, uint256, var, void, ether, finney, szabo, wei, days, hours, minutes, seconds, weeks, years},	% types; money and time units
	keywordstyle=[2]\color{teal}\bfseries,
	keywords=[3]{block, blockhash, coinbase, difficulty, gaslimit, number, timestamp, msg, data, gas, sender, sig, value, now, tx, gasprice, origin},	% environment variables
	keywordstyle=[3]\color{violet}\bfseries,
	identifierstyle=\color{black},
	sensitive=false,
	comment=[l]{//},
	morecomment=[s]{/*}{*/},
	commentstyle=\color{gray}\ttfamily,
	stringstyle=\color{red}\ttfamily,
	morestring=[b]',
	morestring=[b]"
}

%% Focusing

%% Two Counter Machines

%% Type Equivalence Polymorphism

\newcommand{\clo}[2]{\langle #1 \semi #2\rangle}
\newcommand{\SD}{\mathcal{S}(\mathcal{D})}
\newcommand{\SDO}{\mathcal{S}(\mathcal{D}_0)}

%% Labelled Transition System

%% Context-free translation
\newcommand{\llam}[2]{\hat{\lambda} #1.\,#2}
\newcommand{\lapp}[2]{#1\,#2}

% Used for displaying a sample figure. If possible, figure files should
% be included in EPS format.
%
% If you use the hyperref package, please uncomment the following line
% to display URLs in blue roman font according to Springer's eBook style:
% \renewcommand\UrlFont{\color{blue}\rmfamily}

\begin{document}

%%
%% The "title" command has an optional parameter,
%% allowing the author to define a "short title" to be used in page headers.
\title{Nested Session Types}

%\titlerunning{Abbreviated paper title}
% If the paper title is too long for the running head, you can set
% an abbreviated paper title here
%
\author{Ankush Das\inst{1} \and
Henry DeYoung\inst{1} \and
Andreia Mordido\inst{2} \and 
Frank Pfenning\inst{1}}
\authorrunning{A. Das et al.}
%% First names are abbreviated in the running head.
%% If there are more than two authors, 'et al.' is used.
%%
\institute{Carnegie Mellon University, USA \and
LASIGE, Faculdade de Ci\^encias, Universidade de Lisboa, Portugal}

\maketitle

% \vspace{-1em}

\begin{abstract}
  Session types statically describe communication protocols between
  concurrent message-passing processes. Unfortunately, parametric polymorphism
  even in its restricted prenex form
  is not fully understood in the context of session types. In this paper,
  we present the metatheory of session types extended with prenex polymorphism and,
  as a result, nested recursive datatypes.
  Remarkably, we prove that type equality is decidable by exhibiting a reduction
  to trace equivalence of deterministic first-order grammars. Recognizing the high theoretical
  complexity of the latter, we also propose a novel type equality
  algorithm and prove its soundness. We observe that the algorithm is
  surprisingly efficient and, despite its incompleteness, sufficient for
  all our examples. We have implemented our ideas by extending the Rast
  programming language with nested session types. We conclude with several examples
  illustrating the expressivity of our enhanced type system.
\end{abstract}

% \vspace{-2em}

\section{Introduction}
\label{sec:intro}

Session types express and enforce interaction protocols in message-passing
systems~\cite{Honda93CONCUR,takeuchi1994interaction}. In this work, we focus
on \emph{binary session types} that describe bilateral protocols between two
endpoint processes performing dual actions. Binary session types obtained a
firm logical foundation since they were shown to be in a Curry-Howard correspondence
with linear logic propositions~\cite{Caires10concur,Caires16mscs,Wadler12icfp}.
This allows us to rely on properties of cut reduction to derive type safety
properties such as \emph{progress (deadlock freedom)} and \emph{preservation
(session fidelity)}, which continue to hold even when extended to
recursive types and processes~\cite{Das20FSCD}.

However, the theory of session types is still missing a crucial piece:
a general understanding of prenex (or ML-style) parametric polymorphism, encompassing
recursively defined types, polymorphic type
constructors, and nested types.  We abbreviate the sum of these
features simply as \emph{nested types}~\cite{BirdM98}.
Prior work has restricted itself to parametric polymorphism
either: in prenex form without nested types~\cite{Griffith16phd,Thiemann16icfp}; with
explicit higher-rank quantifiers~\cite{PolyESOP13,Perez14ic} (including
bounded ones~\cite{Gay08}) but without general recursion; or in
specialized form for iteration at the type
level~\cite{Thiemann20popl}.  None of these allow a free, \emph{nested} use of polymorphic type
constructors combined with prenex polymorphism.

In this paper,
we develop the metatheory of this rich language of nested session types.
Nested types are reasonably well understood in the context of
functional languages~\cite{BirdM98,Johann09hosc} and have a
number of interesting
applications~\cite{Connelly95mscs,Hinze10jfp,Okasaki96phd}.  One
difficult point is the interaction of nested types with polymorphic
recursion and type inference~\cite{Mycroft84ISP}.  By adopting
bidirectional type-checking we avoid this
particular set of problems altogether, at the cost of some additional
verbosity.  However, we have a new problem namely that session type
definitions are generally \emph{equirecursive} and \emph{not
generative}.  This means that even before we consider nesting, with
the definitions
\begin{mathpar}
  \m{list}[\alpha] = \ichoice{\mb{nil} : \one, \mb{cons} : \alpha \tensor \m{list}[\alpha]} \and
    \m{list'}[\alpha] = \ichoice{\mb{nil} : \one, \mb{cons} : \alpha \tensor \m{list'}[\alpha]}
\end{mathpar}
we have $\m{list}[A] \equiv \m{list'}[B]$ and also
$\m{list}[\m{list'}[A]] \equiv \m{list'}[\m{list}[B]]$ provided
$A \equiv B$.  The reason is that both types specify the same communication
behavior---only their name (which is irrelevant) is different.
As the second of these equalities shows, deciding the equality of
nested occurrences of type constructors is inescapable: allowing
type constructors (which are necessary in many practical examples)
means we also have to solve type equality for nested types. For example,
the types $\m{Tree}[\alpha]$ and $\m{STree}[\alpha][\kappa]$ represent
binary trees and their faithfully (and efficiently) serialized form respectively.
\begin{mathpar}
  \m{Tree}[\alpha] = \ichoice{\mb{node} : \m{Tree}[\alpha] \tensor \alpha
  \tensor \m{Tree}[\alpha] , \mb{leaf} : \one} \and
  \m{STree}[\alpha, \kappa] = \ichoice{\mb{nd} : \m{STree}[\alpha, \alpha
  \tensor \m{STree}[\alpha, \kappa]], \mb{lf} : \kappa}
\end{mathpar}
We have that $\m{Tree}[\alpha] \tensor \kappa$ is isomorphic to
$\m{STree}[\alpha, \kappa]$ and that the processes witnessing the
isomorphism can be easily implemented (see Section~\ref{sec:examples}).

%%remove?
%A key insight is that (equirecursive) nested types are closely related
%to certain classes of context-free grammars.  To see the connection,
%consider the type definitions below, whose communication behavior
%captures the context-free language $L=\{\mb{a}^n \mb{b}^n\mid n> 0\}$:
%\begin{mathpar}
%  S = \ichoice{\mb{a} : A[\one]} \and
%  A[\alpha] = \ichoice{\mb{a} : A[B[\alpha]], \mb{b} : \alpha} \and
%  B[\alpha] = \ichoice{\mb{b} : \alpha}
%\end{mathpar}
%The type $S$ prescribes the provider to select from a \emph{single internal choice}
%label $\mb{a}$, send it on the offered channel and continue to provide $A[\one]$. The type $A$,
%parameterized by $\alpha$, offers \emph{two} internal choice options: either select and
%send $\mb{a}$ and recursively continue to provide type $A[B[\alpha]]$, or send $\mb{b}$
%and continue to provide $\alpha$. Intuitively, each polymorphic recursive
%call to $A$ prepends one $B$ to the type argument. Thus, the number of $B$s in the
%type parameter \emph{remembers} the number of $\mb{a}$'s that have been sent. Accordingly,
%that is exactly the number of $\mb{b}$'s that should be sent along the same channel. Once
%the type $A[\alpha]$ outputs $\mb{b}$, it transitions to type $\alpha$ which contains the
%prepended $B$s. At this point, the process is bound to send $\mb{b}$s which will consume
%the prepended $B$s from the type argument one at a time. The communication will terminate
%upon reaching type $\one$.
%

At the core of type checking such programs lies \emph{type equality}. 
We show that we can translate type equality for nested
session types to the trace equivalence problem for
deterministic first-order grammars, which was shown to be decidable by
Jan\v{c}ar, albeit with doubly-exponential
complexity~\cite{jancar2012bisimilarity}.
% Jan\v{c}ar's motivation was
% to further elucidate the decidability of the equivalence problem for
% deterministic pushdown automata
% (DPDAs)~\cite{Senizergues02,Stirling01tcs}.  
% rewrite 
%As a special case we can
%also express the smaller class of \emph{context-free session
%  types}~\cite{Thiemann16icfp}, although their underlying language of
%expressions is richer than ours (e.g., not entirely linear).
Solomon~\cite{Solomon78popl} already proved a related connection between
\emph{inductive} type equality for nested types and language equality
for DPDAs.  The difference is that session type equality must be
defined coinductively, as a bisimulation, rather than via language
equivalence~\cite{Gay2005}.  This is because session types capture
communication behavior rather than the structure of closed values so a
type such as $\m{R} = \ichoice{\mb{a} : \m{R}}$ is not equal to the
empty type $\m{E} = \ichoice{}$.
The reason is that the former type can send infinitely many $\mb{a}$'s
while the latter cannot (due to the coinductive interpretation).
Interestingly, if we
imagine a lazy functional language such as Haskell with non-generative
recursive types, then $\m{R}$ and $\m{E}$ would also be different.  In
fact, nothing in our analysis of equirecursive nested types depends on
linearity, just on the coinductive interpretation of types.  Several
of our key results, namely decidability of type equality and a practical
algorithm for it, 
apply to lazy
functional languages!  Open in this different setting would still be
the question of type inference, including the treatment of polymorphic
recursion.

The decision procedure for deterministic first-order grammars does not
appear to be directly suitable for implementation, in part due to
its doubly-exponential complexity bound.  Instead we develop an
algorithm combining loop detection~\cite{Gay2005} with 
instantiation%\footnote{REMOVE THIS?}
~\cite{Das20CONCUR} and a special treatment of reflexivity
to handle all cases that would have passed in a nominal system.
The algorithm is sound, but incomplete, and reports success, a counterexample,
or an inconclusive outcome (which counts as failure).  In our
experience, the algorithm is surprisingly efficient and sufficient
for all our examples.

We have implemented nested session types and integrated them with
the Rast language that is based on session types~\cite{Das20FSCD,Das20CONCUR,Das20PPDP}.
We have evaluated our prototype on several examples such as the Dyck
language~\cite{Dyck1882}, an expression server~\cite{Thiemann16icfp}
and serializing binary trees, and standard polymorphic data
structures such as lists, stacks and queues.

Most closely related to our work is context-free session types
(CFSTs)~\cite{Thiemann16icfp}. CFSTs also enhance the expressive power
of binary session types by extending types with a notion of
sequential composition of types.
In connection with CFSTs, we identified a proper fragment of
nested session types closed under sequential composition and therefore
nested session types are strictly more expressive than CFSTs.

The main technical contributions of our work are:
\begin{itemize}
\item A uniform language of session types supporting prenex polymorphism,
  type constructors, and nested types and its type safety proof
  (Sections~\ref{sec:types}, \ref{sec:language}).
\item A proof of decidability of type equality (Section~\ref{sec:type-equiv}).
\item A practical algorithm for type equality
  and its soundness proof (Section~\ref{sec:algorithm}).
\item A proper fragment of nested session types that is
closed under sequential composition, the main feature of 
context-free session types (Section~\ref{sec:context-free}).
\item An implementation and integration with the Rast language (Section~\ref{sec:impl}).
\end{itemize}

% \vspace{-1em}
\section{Overview of Nested Session Types}\label{sec:overview}

The main motivation for studying nested types is quite practical and 
generally applicable to programming languages with structural type systems.
We start by applying parametric type constructors for a standard polymorphic
queue data structure. We also demonstrate how the types can be made more
precise using nesting. A natural consequence of having nested types
is the ability to capture (communication) patterns characterized by 
context-free languages. As an illustration, we express the Dyck language of
balanced parentheses and show how nested types are connected to DPDAs also.

% The underlying base system of session types is derived from a Curry-Howard
% interpretation~\cite{Caires10concur,Caires16mscs} of intuitionistic linear logic
% \cite{Girard87tapsoft}. The key idea is that an intuitionistic linear sequent
% $A_1 \; A_2 \; \ldots \; A_n \vdash A$
% is interpreted as the interface to a process $P$. We label each of the
% antecedents with a channel name $x_i$ and the succedent with channel name $z$.
% The $x_i$'s are \emph{channels used by} $P$ and $z$ is the \emph{channel provided by} $P$.
% \begin{mathpar}
%   (x_1 : A_1) \; (x_2 : A_2) \ldots (x_n : A_n) \vdash P :: (z : C)
% \end{mathpar}
% The resulting judgment formally states that process $P$ provides a service of
% session type $C$ along channel $z$, while using the services of session types $A_1,
% \ldots, A_n$ provided along channels $x_1, \ldots, x_n$ respectively. All these
% channels must be distinct. We abbreviate the antecedent of the sequent by $\Delta$.

\paragraph{\textbf{Queues}}
A standard application of parameterized types is the
definition of polymorphic data structures such as lists, stacks, or queues.
As a simple example, consider the nested type:
\begin{mathpar}
\m{queue}[\alpha] \triangleq
  \echoice{\mb{ins} : \alpha \lolli \m{queue}[\alpha],
           \mb{del} : \ichoice{\mb{none} : \one,
                               \mb{some} : \alpha \tensor \m{queue}[\alpha]}}
\end{mathpar}
The type $\m{queue}$, parameterized by $\alpha$, represents a queue 
with values of type $\alpha$. A process providing this type offers an 
\emph{external choice} ($\with$) enabling the client to either \emph{insert} a 
value of type $\alpha$ in the queue (label $\mb{ins}$), or to 
\emph{delete} a value from the queue (label $\mb{del}$).
After receiving label $\mb{ins}$, the provider expects
to receive a value of type $\alpha$ (the $\lolli$ operator) and then proceeds
to offer $\m{queue}[\alpha]$. Upon reception of the label
$\mb{del}$, the provider queue is either empty, in which case it sends
the label $\mb{none}$ and terminates the session (as prescribed by type $\one$),
or is non-empty, in which case it sends a value of type $\alpha$ (the $\tensor$
operator) and recurses with $\m{queue}[\alpha]$.

Although parameterized type definitions are sufficient to express
the standard interface to polymorphic data structures, we propose
\emph{nested session types} which are considerably more expressive.
For instance, we can use type parameters to track the number of
elements in the queue in its type!
\begin{mathpar}
  \m{queue}[\alpha, x] \triangleq
  \echoice{\mb{ins} : \alpha \lolli \m{queue}[\alpha, \m{Some}[\alpha, x]],
           \mb{del} : x} \\
  \m{Some}[\alpha, x] \triangleq \ichoice{\mb{some} : \alpha \tensor \m{queue}[\alpha, x] } \and
  \m{None} \triangleq \ichoice{\mb{none} : \one }
\end{mathpar}
The second type parameter $x$ tracks the number of
elements. This parameter can be understood as a \emph{symbol
stack}. On inserting an element, we recurse
to $\m{queue}[\alpha, \m{Some}[\alpha, x]]$ denoting the \emph{push} of $\m{Some}$ symbol on
stack $x$. We initiate the empty queue with the type $\m{queue}[\alpha, \m{None}]$
where the second parameter denotes \emph{an empty symbol stack}.
Thus, a queue with $n$ elements would have the type $\m{queue}[\alpha, \m{Some}^n[\alpha, \m{None}]]$.
On receipt of the $\m{del}$ label, the type transitions to $x$ which can
either be $\m{None}$ (if the queue is empty) or $\m{Some}[\alpha, x]$ (if
the queue is non-empty). In the latter case, the type sends label $\mb{some}$
followed by an element, and transitions to $\m{queue}[\alpha, x]$ denoting a
\emph{pop} from the symbol stack. In the former case, the type sends the label $\m{none}$
and terminates. Both these behaviors are reflected in the definitions of types
$\m{Some}$ and $\m{None}$.

\paragraph{\textbf{Context-Free Languages}}
Recursive session types 
capture the class of regular languages~\cite{Thiemann16icfp}. 
However, in practice, many useful languages are beyond regular. 
As an illustration, suppose we would like to express a balanced parentheses
language, also known as the Dyck language~\cite{Dyck1882} with the
end-marker $\$$.
We use $\mb{L}$ to denote an opening
symbol, and $\mb{R}$ to denote a closing symbol
(in a session-typed mindset, $\mb{L}$ can represent client request
and $\mb{R}$ is server response). We need to enforce
that each $\mb{L}$ has a corresponding closing $\mb{R}$ and they are
properly nested. 
To express this, we need to track the
number of $\mb{L}$'s in the output with the session type. However,
this notion of \emph{memory} is beyond the expressive power of regular languages,
so mere recursive session types will not suffice.

We utilize the expressive power of nested types to express this behavior.
\begin{center}
  \begin{minipage}{0cm}
  \begin{tabbing}
  $T[x] \triangleq \ichoice{\mb{L} : T[T[x]], \mb{R} : x} \qquad
  D \triangleq \ichoice{\mb{L} : T[D], \mb{\$} : \one}$
  \end{tabbing}
  \end{minipage}
\end{center}
The nested type $T[x]$ takes $x$ as a type parameter and either outputs
$\mb{L}$ and continues with $T[T[x]]$, or outputs $\mb{R}$ and continues
with $x$. The type $D$ either outputs $\mb{L}$ and continues
with $T[D]$, or outputs $\mb{\$}$ and terminates. The type $D$ expresses
a Dyck word with end-marker $\$$~\cite{korenjak1966simple}.

The key idea here is that the number of $T$'s in the type of a word
tracks the number of unmatched $\mb{L}$'s in it. Whenever the type $T[x]$
outputs $\mb{L}$, it recurses with $T[T[x]]$ incrementing the number of
$T$'s in the type by $1$. Dually, whenever the type outputs $\mb{R}$, it
recurses with $x$ decrementing the number of $T$'s in the type by $1$.
The type $D$ denotes a balanced word with no unmatched $\mb{L}$'s. 
Moreover, since we can only output $\$$
(or $\mb{L}$) at the type $D$ and \emph{not} $\mb{R}$, we obtain the invariant that
any word of type $D$ must be balanced.
If we imagine the parameter $x$ as the symbol stack, outputting an $\mb{L}$
pushes $T$ on the stack, while outputting $\mb{R}$ pops $T$ from the stack.
The definition of $D$ ensures that once an $\mb{L}$ is outputted, the symbol
stack is initialized with $T[D]$ indicating one unmatched $\mb{L}$.

Nested session types do not restrict communication so that the words
represented \emph{have to be balanced}. To this end,
the type $D'$ can model the \emph{cropped Dyck language}, where
\emph{unbalanced} words can be captured.
\begin{mathpar}
  T'[x] \triangleq \ichoice{\mb{L} : T'[T'[x]], \mb{R} : x, \mb{\$} : \one} \and
  D' \triangleq \ichoice{\mb{L} : T'[D'], \mb{\$} : \one}
\end{mathpar}
The only difference between types $T[x]$ and $T'[x]$ is that $T'[x]$
allows us to terminate at any point using the $\$$ label which
immediately transitions to type $\one$.
\Polymorphicsessiontypes\ can not only capture
the class of deterministic context-free languages recognized by
DPDAs that \emph{accept by empty stack} (balanced words), but also
the class of deterministic context-free languages recognized  
by DPDAs that \emph{accept by final state} (cropped words).
% Moreover, types can utilize \emph{multiple stacks} since each
% type parameter can potentially act as a stack.

\paragraph{\textbf{Multiple Kinds of Parentheses}}
We can use nested types to express more general words
with different kinds of parentheses. Let $\mb{L}$ and $\mb{L'}$
denote two kinds of opening symbols, while $\mb{R}$ and $\mb{R'}$ denote
their corresponding closing symbols respectively. We define the session types
\[
  \begin{array}{rcl}
  S[x] & \triangleq & \ichoice{\mb{L} : S[S[x]], \mb{L'} : S'[S[x]], \mb{R} : x} \\[0.2em]
  S'[x] & \triangleq & \ichoice{\mb{L} : S[S'[x]], \mb{L'} : S'[S'[x]], \mb{R'} : x} \\[0.2em]
  E & \triangleq & \ichoice{\mb{L} : S[E], \mb{L'} : S'[E], \mb{\$} : \one}
  \end{array}
\]
% \begin{Verbatim}
%   type S[x] = +{L : S[S[x]], L' : S'[S[x]], R : x}
%   type S'[x] = +{L : S[S'[x]], L' : S'[S'[x]], R' : x}
%   type E = +{L : S[E], L' : S'[E], $ : 1}
% \end{Verbatim}
We \emph{push} symbols $S$ and
$S'$ to the stack on outputting $\mb{L}$ and $\mb{L'}$ respectively.
Dually, we \emph{pop} $S$ and $S'$ from the stack on outputting
$\mb{R}$ and $\mb{R'}$ respectively. Then, the type $E$ defines
an \emph{empty stack}, thereby representing a balanced Dyck word.
This technique can be generalized to any number of kinds of
brackets.
%; we simply add a new session type for each kind of bracket.

% \paragraph{\textbf{Non-Generative Types}}
% Unlike functional programming languages, datatypes in our language
% are not generative. This allows the programmer to use the same labels
% $\mb{L}$, $\mb{R}$ and $\mb{\$}$ for both types $D$ and $E$.
% In particular, the process $P$ described above that outputs
% $\mb{L L R R \$}$ can either be typed with $D$ or $E$.
% %Our type checker can indeed verify this.
% %\begin{Verbatim}
% %  decl P : . |- (w : D)
% %  proc w <- P = w.L ; w.L ; w.R ; w.R ; w.$ ; close w
% %  decl P' : . |- (w : E)
% %  proc w <- P' = w.L ; w.L ; w.R ; w.R ; w.$ ; close w
% %\end{Verbatim}
% %Both $P$ and $P'$ have the same definition. Thus, the programmer
% %can use the same set of labels for different types, allowing
% %more flexibility. 

% Since we do not perform full
% type inference, rather bi-directional type checking, allowing
% non-generative types is actually feasible and efficient.

\paragraph{\textbf{Multiple States as Multiple Parameters}}
% As we saw, nested session types capture languages recognized by 
% DPDAs accepting either by empty stack or final state.
Using defined type names with \emph{multiple} type parameters, we
enable types to capture the language of DPDAs with several states.
Consider the language
$L_3 = \{\mb{L} ^n \mb{a} \,\mb{R}^n \mb{a} \cup \mb{L}^n \mb{b} \,\mb{R}^n \mb{b} \mid n > 0\}$, 
proposed by Korenjak and Hopcroft~\cite{korenjak1966simple}. A word in this language starts with a
sequence of opening symbols $\mb{L}$, followed by an \emph{intermediate symbol},
either $\mb{a}$ or $\mb{b}$. Then, the word contains as many closing symbols
$\mb{R}$ as there were $\mb{L}$s and terminates with the symbol $\mb{a}$ or $\mb{b}$
\emph{matching} the intermediate symbol.
\begin{mathpar}
  U \triangleq \ichoice{\mb{L} : O[C[A],C[B]]} \and
  O[x,y] \triangleq \ichoice{\mb{L} : O[C[x],C[y]], \mb{a} : x, \mb{b} : y} \\
  C[x] \triangleq \ichoice{\mb{R} : x} \and
  A \triangleq \ichoice{\mb{a} : \one} \and
  B \triangleq \ichoice{\mb{b} : \one} \and
\end{mathpar}
The $L_3$ language is characterized by session type $U$. Since the type $U$
is unaware of which intermediate symbol among $\mb{a}$ or $\mb{b}$ would eventually
be chosen, it cleverly maintains \emph{two symbol stacks} in the two type
parameters $x$ and $y$ of $O$. We initiate type $U$ with outputting $\mb{L}$
and transitioning to $O[C[A], C[B]]$ where the symbol $C$ tracks that we
have outputted \emph{one} $\mb{L}$. The types $A$ and $B$ represent the
intermediate symbols that might be used in the future. The type $O[x,y]$
can either output an $\mb{L}$ and transition to $O[C[x],C[y]]$ \emph{pushing}
the symbol $C$ onto \emph{both} stacks; or it can output $\mb{a}$ (or $\mb{b}$) and transition
to the first (resp. second) type parameter $x$ (resp. $y$). Intuitively,
the type parameter $x$ would have the form $C^n[A]$ for $n > 0$ (resp. $y$
would be $C^n[B]$). Then, the type $C[x]$ would output an $\mb{R}$ and \emph{pop}
the symbol $C$ from the stack by transitioning to $x$. Once all the closing
symbols have been outputted (note that you cannot terminate pre-emptively),
we transition to type $A$ or $B$ depending on the intermediate symbol chosen.
Type $A$ outputs $\mb{a}$ and terminates, and similarly, type $B$ outputs $\mb{b}$
and terminates. Thus, we simulate the $L_3$ language (not possible
with context-free session types~\cite{Thiemann16icfp}) using two type parameters.

More broadly, nested types can neatly capture \emph{complex server-client interactions}.
For instance, client requests can be captured using labels $\mb{L, L'}$ while
server responses can be captured using labels $\mb{R, R'}$ expressing \emph{multiple
kinds} of requests. Balanced words will then represent that all requests
have been handled. The types can also guarantee that responses do not 
exceed requests.

\paragraph{\textbf{Concatenation of Dyck Words}}
We conclude this section by proving some standard properties on balanced
parentheses: \emph{closure under concatenation} and \emph{closure under
wrapping}. If $w_1\$$ and $w_2\$$
are two balanced words, then so is $w_1 w_2 \$$. Similarly, if
$w\$$ is a balanced word, then so is $\mb{L}w\mb{R}\$$. These two properties
can be proved by implementing $\mi{append}$ and $\mi{wrap}$ processes
capturing the former and latter properties.
\begin{mathpar}
  \mi{append} : (w_1 : D), (w_2 : D) \vdash (w : D) \and
  \mi{wrap} : (w : D) \vdash (w' : D)
\end{mathpar}
The above declarations describe the type for the two processes.
The $\mi{append}$ process uses two channels $w_1$ and $w_2$
of type $D$ and provides $w : D$, whereas $\mi{wrap}$ uses $w : D$
and provides $w' : D$.

\begin{Verbatim}
  decl fmap'[a][b] : (f : a -o b) |- (g : T[a] -o T[b])
  proc g <- fmap'[a][b] f =
    w <- recv g ; % (f : a -o b) (w : T[a]) |- (g : T[b])
    case w (
      L =>        % (f : a -o b) (w : T[T[a]]) |- (g : T[b])
        g.L ;     % (f : a -o b) (w : T[T[a]]) |- (g : T[T[b]])
        h0 <- fmap'[a][b] f ;
        h1 <- fmap'[T[a]][T[b]] h0 ;
        send h1 w ; g <-> h1
    | R =>    % (f : a -o b) (w : a) |- (g : T[b])
        g.R ; % (f : a -o b) (w : a) |- (g : b)
        send f w ; g <-> f
    )

  decl fmap[a][b] : (f : a -o b) (w : T[a]) |- (w' : T[b])
  proc w' <- fmap[a][b] f w =
    f' <- fmap'[a][b] f ; % (f' : T[a] -o T[b]) (w : T[a]) |- (w' : T[b])
    send f' w ; w' <-> f'

  decl append' : (w2 : D) |- (f : D -o D)
  proc f <- append' w2 =
    w1 <- recv f ; % (w1 : D) (w2 : D) |- (f : D)
    case w1 (
      L => % (w1 : T[D]) (w2 : D) |- (f : D)
        f.L ; % (w1 : T[D]) (w2 : D) |- (f : T[D])
        g <- append' w2 ; % (w1 : T[D]) (g : D -o D) |- (f : T[D])
        f <- fmap[D][D] g w1
    | $ => % (w1 : 1) (w2 : D) |- (f : D)
        wait w1 ; f <-> w2
    )

  proc w <- append w1 w2 =
    f <- append' w2 ; % (w1 : D) (f : D -o D) |- (w : D)
    send f w1 ; w <-> f
\end{Verbatim}

% \vspace{-1em}
\section{Description of Types}
\label{sec:types}

The underlying base system of session types is derived from a Curry-Howard
interpretation~\cite{Caires10concur,Caires16mscs} of intuitionistic linear logic
\cite{Girard87tapsoft}.
Below we describe the session types, their operational interpretation
and the continuation type.
% Session type operators ($\oplus$, $\with$, $\tensor$, $\lolli$, $\one$)
% are derived from assigning an operational interpretation to connectives in
% intuitionistic linear logic. In addition to the usual operators, 
% we introduce type identifiers
% $V \indv{\alpha}$ indexed by a sequence of type variables providing
% programmers with the flexibility to define their own types
% using polymorphic recursion. We also have explicit quantifiers. 
\[\arraycolsep=4.5pt
  \begin{array}{lclll}
    A,B,C & ::= & \ichoice{\ell : A_\ell}_{\ell \in L} & \mbox{send label $k \in L$} & \mbox{continue at type $A_k$} \\
          & \mid & \echoice{\ell : A_\ell}_{\ell \in L} & \mbox{receive label $k \in L$} & \mbox{continue at type $A_k$} \\
          & \mid & A \tensor B & \mbox{send channel $a : A$} & \mbox{continue at type $B$} \\
          & \mid & A \lolli B & \mbox{receive channel $a : A$} & \mbox{continue at type $B$} \\
          & \mid & \one & \mbox{send $\m{close}$ message} & \mbox{no continuation} \\
          & \mid & \alpha & \mbox{type variable} \\
          & \mid & V\indv{B} & \mbox{defined type name} \\
  \end{array}
\]

The basic type operators have the usual interpretation:
the \emph{internal choice} operator $\ichoice{\ell \colon A_{\ell}}_{\ell\in L}$ 
selects a branch with label $k \in L$ with corresponding
continuation type $A_k$; the \emph{external choice} operator 
$\echoice{\ell \colon A_{\ell}}_{\ell\in L}$ offers a choice
with labels $\ell \in L$ with corresponding continuation types
$A_{\ell}$; the \emph{tensor} operator $ A \tensor B$ 
represents the channel passing type that consists of sending a
channel of type $A$ and proceeding with type $B$;
dually, the \emph{lolli} operator $A \lolli B$ consists of 
receiving a channel of type $A$ and continuing with 
type $B$; the \emph{terminated session} $\one$ is
the operator that closes the session.

We also support \emph{type constructors} to define new \emph{type names}. 
A type name $V$ is defined according to a \emph{type definition} 
$V\indv{\alpha} = A$ that is parameterized by a sequence of
\emph{distinct type variables} $\overline\alpha$ that the type $A$ can refer to.
We can use type names in a type expression using $V \indv{B}$.
Type expressions can also refer to parameter $\alpha$
available in scope.
The \emph{free variables} in type $A$ refer to the set of type variables
that occur freely in $A$.
Types without any free variables are called \emph{closed types}.
We call any type not of the form $V \indv{B}$ to be \emph{structural}.

All type definitions are stored in a finite global \emph{signature} $\Sg$
defined as
\[
  \begin{array}{llcl}
    \mbox{Signature} & \Sigma & ::= & \cdot \mid \Sigma, V\indv{\alpha} = A
  \end{array}
\]
In a \emph{valid signature}, all definitions $V\indv{\alpha} = A$ are contractive,
meaning that $A$ is \emph{structural}, i.e. not itself a type name.
This allows us to take an \emph{equirecursive} view of type definitions, which means
that unfolding a type definition does not require communication.
More concretely, the type $V \indv{B}$ is considered equivalent to its unfolding
$A[\overline{B}/\overline{\alpha}]$.
We can easily adapt our definitions to an \emph{isorecursive}
view~\cite{Lindley16icfp,Derakhshan19corr} with explicit unfold messages.
All type names $V$ occurring in a valid signature
must be defined, and all type variables defined in a valid
definition must be distinct.
Furthermore, for a valid definition $V \indv{\alpha} = A$, the free variables occurring
in $A$ must be contained in $\overline{\alpha}$.
This top-level scoping of all type variables is what
we call the \emph{prenex form of polymorphism}.

% \vspace{-1em}
\section{Type Equality}
\label{sec:type-equiv}

Central to any practical type checking algorithm is type equality.
In our system, it is necessary for the rule of identity (forwarding)
and process spawn, as well as the channel-passing constructs
for types $A \tensor B$ and $A \lolli B$.
However, with nested polymorphic recursion, checking equality becomes
challenging.
We first develop the underlying theory of equality providing its
definition, and then establish its reduction to checking trace
equivalence of deterministic first-order grammars.

% The difficulty underlying type equality 
% increases when more expressive forms of recursion are considered.
% In this section we introduce the definition of type equality for
% \polymorphicsessiontypes\ and prove that the type equality problem is decidable.
% However, building on the legacy of the undecidability of the inclusion problem for 
% simple languages~\cite{Friedman76}, we prove that subtyping is undecidable.

\subsection{Type Equality Definition}
\label{subsec:type-equality}
Intuitively, two types are equal if they permit exactly the \emph{same}
communication behavior.
Formally, type equality is captured using a coinductive definition
following seminal work by Gay and Hole~\cite{Gay2005}.
\begin{definition}\label{def:unfold}
  We first define $\unfold{A}$ as
  \begin{mathpar}
    \infer[\m{def}]
    {\unfold{V \indv{B}} = A[\overline{B}/\overline{\alpha}]}
    {V \indv{\alpha} = A \in \Sg}
    \and
    \infer[\m{str}]
    {\unfold{A} = A}
    {A \not= V \indv{B}}
  \end{mathpar}
\end{definition}

Unfolding a structural type simply returns $A$.
Since type definitions are \emph{contractive}~\cite{Gay2005},
the result of unfolding is never a type name application and it always
terminates in one step.
\begin{definition}\label{def:rel}
	Let $\mi{Type}$ be the set of closed type expressions (no free variables).
  A relation $\rel \subseteq
  \mi{Type} \times \mi{Type}$ is a type bisimulation if $(A, B) \in
  \rel$ implies:
  \begin{itemize}[leftmargin=*]
    \item If $\unfold{A} = \ichoice{\ell : A_\ell}_{\ell \in L}$, then $\unfold{B} =
    \ichoice{\ell : B_\ell}_{\ell \in L}$ and also $(A_\ell, B_\ell) \in \rel$ for
    all $\ell \in L$.

    \item If $\unfold{A} = \echoice{\ell : A_\ell}_{\ell \in L}$, then $\unfold{B} =
    \echoice{\ell : B_\ell}_{\ell \in L}$ and also $(A_\ell, B_\ell) \in \rel$ for
    all $\ell \in L$.

    \item If $\unfold{A} = A_1 \tensor A_2$, then $\unfold{B} =
    B_1 \tensor B_2$ and $(A_1, B_1) \in \rel$ and
    $(A_2, B_2) \in \rel$.

    \item If $\unfold{A} = A_1 \lolli A_2$, then $\unfold{B} =
    B_1 \lolli B_2$ and $(A_1, B_1) \in \rel$ and
		$(A_2, B_2) \in \rel$.
		
    \item If $\unfold{A} = \one$, then $\unfold{B} = \one$.
  \end{itemize}
\end{definition}

\begin{definition}\label{def:tpeq}
  Two closed types $A$ and $B$ are equal ($A \equiv B$) iff there exists a type
  bisimulation $\rel$ such that $(A, B) \in \rel$.
\end{definition}

When the signature $\Sigma$ is not clear from context we add a subscript,
$A \equiv_{\Sigma} B$.
This definition only applies to types with no free type variables.
Since we allow parameters in type definitions,
we need to define equality in the presence of free type variables.
To this end, we define the notation $\forall \vars. \, A \equiv B$
where $\vars$ is a collection of type variables and $A$ and $B$
are valid types w.r.t. $\vars$ (i.e., free variables in $A$ and
$B$ are contained in $\vars$).
\begin{definition}\label{def:tpeq_vars}
  We define $\forall \vars. \, A \equiv B$ iff for all
  closed type substitutions $\sigma : \vars$,
  we have $A[\sigma] \equiv B[\sigma]$.
\end{definition}

\subsection{Decidability of Type Equality}

Solomon~\cite{Solomon78popl} proved that types defined using parametric
type definitions with an \emph{inductive interpretation} can be
translated to DPDAs, thus reducing type equality
to language equality on DPDAs. However, our type definitions  
have a \emph{coinductive interpretation}.
As an example, consider the types 
$A= \ichoice{\mb{a}: A}$ and
$B= \ichoice{\mb{b}: B}$. With an \emph{inductive} interpretation, 
types $A$ and $B$ are empty (because they do not have terminating symbols) and, thus, are equal.
However, with a \emph{coinductive} interpretation, type $A$ will
send an infinite number of $\mb{a}$'s, and $B$ will send an infinite
number of $\mb{b}$'s, and are thus not equal.
Our reduction needs to account for this coinductive behavior.

We show that type equality of \polymorphicsessiontypes\ is decidable
via a reduction to the trace equivalence problem of deterministic
first-order grammars~\cite{Jancar10}.
A \emph{first-order grammar} is a structure $(\mathcal{N},
\mathcal{A},\mathcal{S})$ where 
$\mathcal{N}$ is a set of non-terminals,
$\mathcal{A}$ is a finite set of \emph{actions}, 
and $\mathcal{S}$ is a finite set of \emph{production rules}.
The arity of non-terminal $X\in \mathcal{N}$ is written as $\mathsf{arity}(X)\in\mathbb{N}$.
Production rules rely on a countable set of \emph{variables} $\vars$,
and on the set $\mathcal{T}_\mathcal{N}$ of \emph{regular terms} over
$\mathcal{N}\cup \vars$.
A term is \emph{regular} if the set of subterms is finite
(see~\cite{Jancar10}).

Each production rule has the form $X \overline{\alpha} \xrightarrow{a}
E$ where $X \in \mathcal{N}$ is a non-terminal, $a \in \mathcal{A}$ is an action,
and $\overline{\alpha} \in \vars^*$ are variables that the term 
$E\in \mathcal{T}_\mathcal{N}$ can refer to.
A grammar is \emph{deterministic} if for each pair of $X\in \mathcal{N}$ and 
$a \in\mathcal{A}$,
there is at most one rule of the form 
$X \overline\alpha\xrightarrow{a} E$
in $\mathcal{S}$.
 The substitution of terms $\overline{B}$
 for variables $\overline\alpha$ in a rule 
 $X \overline\alpha\xrightarrow{a} E$,
 denoted by
 $X \overline B\xrightarrow{a} E [\overline B / \overline \alpha]$,
 is the rule
 $(X \overline\alpha\xrightarrow{a} E)[\overline B / \overline \alpha]$.
Given a set of rules $\mathcal{S}$, the trace of a term $T$ is defined as
$\m{trace}_{\mathcal{S}}(T) = \{\overline{a} \in \mathcal{A}^* \mid 
(T \xrightarrow{\overline{a}} T') \in \mathcal{S}, 
\text{ for some }T'\in \mathcal{T}_\mathcal{N} \}$.
Two terms are \emph{trace equivalent}, written as 
$T \sim_{\mathcal{S}} T'$,
if $\m{trace}_{\mathcal{S}}(T) = \m{trace}_{\mathcal{S}}(T')$. 
%We omit the subscript $\mathcal{S}$ when clear from context.

\begin{figure}[t]
\begin{mathpar}
  \infer[\oplus]
	{\vars \vdash \ichoice{\ell : A_\ell}_{\ell \in L} \rightarrow
	(\ichoice{\ell : B_\ell}_{\ell \in L}, \cup_{\ell \in L} \Sg_\ell)}
  {\vars \vdash A_\ell \Rightarrow (B_\ell, \Sg_\ell) \quad (\forall \ell \in L)}
  \and
  \infer[\with]
	{\vars \vdash \echoice{\ell : A_\ell}_{\ell \in L} \rightarrow
	(\echoice{\ell : B_\ell}_{\ell \in L}, \cup_{\ell \in L} \Sg_\ell)}
  {\vars \vdash A_\ell \Rightarrow (B_\ell, \Sg_\ell) \quad (\forall \ell \in L)}
  \and
  \infer[\tensor]
  {\vars \vdash A_1 \tensor A_2 \rightarrow (B_1 \tensor B_2, \Sg_1 \cup \Sg_2)}
  {\vars \vdash A_1 \Rightarrow (B_1, \Sg_1) \and
  \vars \vdash A_2 \Rightarrow (B_2, \Sg_2)}
  \and
  \infer[\lolli]
  {\vars \vdash A_1 \lolli A_2 \rightarrow (B_1 \lolli B_2, \Sg_1 \cup \Sg_2)}
  {\vars \vdash A_1 \Rightarrow (B_1, \Sg_1) \and
  \vars \vdash A_2 \Rightarrow (B_2, \Sg_2)}
  \\\\
  \infer[\one]
  {\vars \vdash \one \rightarrow (\one, \cdot)}
  {}
  \and
  \infer[\m{var}]
  {\vars \vdash \alpha \rightarrow (\alpha, \cdot)}
	{}
	\\\\
	\infer[\m{rename-str}]
	{\vars \vdash A \Rightarrow (B, \Sg \,@\,V[\vars] = B)}
	{A \; \m{structural} \and \vars \vdash A \rightarrow (B, \Sg)
	\and (V \; \m{fresh})}
	\and
	\infer[\m{rename-nostr}]
	{\vars \vdash A \Rightarrow (A, \cdot)}
	{A = \one,\alpha}
	\\\\
	\infer[\m{emp}]
	{(\cdot) \longrightarrow (\cdot)}
	{}
	\and
	\infer[\m{step}]
	{\Sg, V\indv{\alpha} = A \longrightarrow \Sg', \Sg_A, V\indv{\alpha} = B}
	{\Sg \longrightarrow \Sg' \quad
	\overline{\alpha} \vdash A \rightarrow (B, \Sg_A)}
\end{mathpar}
\caption{Algorithmic Rules for Internal Renaming}
\label{fig:renaming_rules}
%\Description{TpEq Algorithm}
\end{figure}

The crux of the reduction lies in the observation that session types
can be translated to terms and type definitions
can be translated to production rules of a first-order grammar.
We start the translation of nested session types to grammars 
by first making an initial pass over
the signature and introducing fresh \emph{internal names}
such that the new type definitions alternate between
structural (except $\one$ and $\alpha$) and non-structural types. These internal names are
parameterized over their free type variables, and their
definitions are added to the signature.
This \emph{internal renaming} simplifies the next step where
we translate this extended signature to grammar production
rules. 

The internal renaming is defined using the judgment $\Sg \longrightarrow \Sg'$
as defined in Figure~\ref{fig:renaming_rules}.
Each definition is taken from the signature $\Sg$, and then
the definition is internally renamed and added to the original
signature.

\begin{example}\label{ex:queue_grammars}
	As a running example, consider the queue type from 
	Section~\ref{sec:overview}:
  \begin{mathpar}
  	Q[\alpha] = \echoice{\mb{ins} : \alpha \lolli Q[\alpha],
  	\mb{del} : \ichoice{\mb{none} : \one, \mb{some} : \alpha \tensor Q[\alpha]}}
	\end{mathpar} 
\end{example}

After performing internal renaming for this type,
we obtain the following signature: 
\[
	\begin{array}{ll}
		Q[\alpha] = {\with}\{\mb{ins} : X_0[\alpha], \mb{del} : X_1[\alpha]\} \hspace{2em} &
		X_1[\alpha] = {\oplus}\{\mb{none} : \one, \mb{some} : X_2[\alpha]\} \\[0.3em]
		X_0[\alpha] = \alpha \lolli Q[\alpha] &
		X_2[\alpha] = \alpha \tensor Q[\alpha]
	\end{array}
\]
We introduce the fresh internal names $X_0$, $X_1$ and $X_2$ (parameterized
with free variable $\alpha$) to represent the continuation type in each case.
Note the alternation between structural and non-structural types (of the
form $V \indv{B}$).

Next, we translate this extended signature to the grammar
$\mathcal{G} = (\mathcal{N},\mathcal{A},\mathcal{S})$ aimed
at reproducing the behavior prescribed by the types as grammar actions.
\[
 \begin{array}{rcll}
	\mathcal{N} & = & \{Q, X_0, X_1, X_2, \bot \}\smallskip\\
	\mathcal{A} & = & \{ \& \mb{ins},\& \mb{del}, \lolli_1, \lolli_2
	 \ichoiceop \mb{none}, \ichoiceop \mb{some},\tensor_1, \tensor_2,
	 \}\smallskip\\
	\mathcal{S} & = & \{ 
	Q\alpha \xrightarrow{\& \mb{ins}} X_0\alpha, \enspace 
	Q\alpha \xrightarrow{\& \mb{del}} X_1\alpha, \enspace 
	X_0\alpha \xrightarrow{\lolli_1 } \alpha, \enspace
	X_0\alpha \xrightarrow{\lolli_2 } Q\alpha, \enspace\\
	&&X_1\alpha \xrightarrow{\ichoiceop \mb{none} } \bot, \enspace 
	X_1\alpha \xrightarrow{\ichoiceop \mb{some} } X_2\alpha, \enspace 
	X_2\alpha \xrightarrow{\tensor_1 } \alpha, \enspace 
	X_2\alpha \xrightarrow{\tensor_2 } Q\alpha
	\}
  \end{array}
\]
Essentially, each defined type name is translated to a fresh non-terminal.
Each type definition then corresponds a sequence of rules: one
for each possible continuation type with the appropriate label
that leads to that continuation.
For instance, the type $Q[\alpha]$ has two possible continuations: transition
to $X_0[\alpha]$ with action $\& \mb{ins}$ or to $X_1[\alpha]$ with
action $\& \mb{del}$.
The rules for all other type names is analogous.
When the continuation is $\one$, we transition to the nullary 
non-terminal $\bot$ disabling any further action.
When the continuation is $\alpha$, we transition to $\alpha$.
Since each type name is defined once, the produced grammar is
deterministic.

Formally, the translation from an (extended) signature to
a grammar is handled by two simultaneous tasks: translating type definitions
into production rules (function $\tau$ below), and converting type names,
variables and the terminated session into grammar terms (function 
$\llparenthesis\cdot\rrparenthesis$). 
The function
$\llparenthesis \cdot \rrparenthesis : \mi{OType} \to \mathcal{T}_{\mathcal{N}}$
from open session types to grammar terms is defined by:
\[
 \begin{array}{rcll}
	\llparenthesis\one\rrparenthesis & = & \bot & \mbox{type $\one$ translates to $\bot$}\\
	\llparenthesis\alpha\rrparenthesis & = & \alpha & \mbox{type variables translate to themselves}\\
	\llparenthesis V[B_1, \ldots, B_n]\rrparenthesis & = & 
	V \llparenthesis B_1\rrparenthesis \cdots \llparenthesis B_n\rrparenthesis \quad &
	\mbox{type names translate to first-order terms}
   \end{array}
\]
Due to this mapping, throughout this section we will use type names 
indistinctly as type names or as non-terminal first-order symbols.

The function $\tau$ converts a type definition $V \indv{\alpha} = A$ 
into a set of production rules and is defined according to the structure 
of $A$ as follows:
\begin{equation*}
	\begin{array}{rcl}
			\tau(V[\overline\alpha] = \ichoice{\ell \colon A_\ell}_{\ell\in L})& = 
			& \{ \llparenthesis V[\overline\alpha]\rrparenthesis
			\xrightarrow{\ichoiceop \ell} 
			\llparenthesis A_\ell\rrparenthesis
			\mid \ell\in L
			\}
			\\
			\tau(V[\overline\alpha] = \echoice{\ell \colon A_\ell}_{\ell\in L})& = 
			& \{ \llparenthesis V[\overline\alpha]\rrparenthesis
			\xrightarrow{\echoiceop \ell} 
			\llparenthesis A_\ell\rrparenthesis
			\mid \ell\in L
			\}
			\\
			\tau(V[\overline\alpha] = A_1\tensor A_2)& = 
			& \{ \llparenthesis V[\overline\alpha]\rrparenthesis
			\xrightarrow{\tensor_i} 
			\llparenthesis A_i\rrparenthesis
			\mid i = 1,2
			\}
			\\
			\tau(V[\overline\alpha] = A_1\lolli A_2)& = 
			& \{ \llparenthesis V[\overline\alpha]\rrparenthesis
			\xrightarrow{\lolli_i} 
			\llparenthesis A_i\rrparenthesis
			\mid i = 1,2
			\}
	 \end{array}
\end{equation*}
The function $\tau$ identifies the actions and continuation types
corresponding to $A$ and translates them to grammar rules.
Internal and external choices lead to actions
$\ichoiceop \ell$ and $\echoiceop \ell$, for each $\ell \in L$,
with $A_\ell$ as the continuation type. The type $A_1\tensor A_2$ enables
two possible actions, $\tensor_1$ and $\tensor_2$, with continuation
$A_1$ and $A_2$ respectively. Similarly $A_1\lolli A_2$ produces
the actions $\lolli_1$ and $\lolli_2$ with $A_1$ and $A_2$
as respective continuations.
Contractiveness ensures that there are no definitions of the
form $V\indv{\alpha} = V'\indv{B}$.
Our internal renaming ensures that we do not encounter cases of
the form $V\indv{\alpha} = \one$ or $V \indv{\alpha} = \alpha$
because we do not generate internal names for them.
For the same reason, the $\llparenthesis\cdot\rrparenthesis$
function is only defined on the complement types $\one$, $\alpha$
and $V\indv{B}$.

The $\tau$ function is extended to translate a signature
by being applied point-wise. Formally, $\tau(\Sigma) =
\bigcup_{(V[\overline\alpha] = A)\in \Sigma}
\tau(V[\overline\alpha] = A)$.
Connecting all pieces, we define the $\m{fog}$
function that translates a signature to a grammar as:
\[
 \begin{array}{l}
 \m{fog}(\Sigma) = (\mathcal{N}, \mathcal{A}, \mathcal{S}),\text{ where: } \qquad
 \mathcal{S}  = \tau(\Sigma) \\[0.3em]
	\quad
	\mathcal{N}  = 
	\{ X\mid (X\overline{\alpha} \xrightarrow{a} 
	E) \in \tau(\Sigma)	\} \qquad
	\mathcal{A}  = 
	\{ a\mid (X\overline{\alpha} \xrightarrow{a} 
	E) \in \tau(\Sigma)	\}
  \end{array}
\]
The grammar is constructed by first computing $\tau(\Sg)$
to obtain all the production rules. $\mathcal{N}$ and
$\mathcal{A}$ are constructed by collecting the set of
non-terminals and actions from these rules.
The finite representation of session types and uniqueness
of definitions ensure that $ \m{fog}(\Sigma)$ is a 
deterministic first-order grammar.

Checking equality of types $A$ and $B$ given signature
$\Sg$ finally reduces to (i) internal renaming of $\Sg$
to produce $\Sg'$, and (ii) checking trace-equivalence of
terms $\llparenthesis A\rrparenthesis$ and 
$\llparenthesis B\rrparenthesis$ given grammar $\m{fog}(\Sg')$.
If $A$ and $B$ are themselves structural, we generate
internal names for them also during the internal renaming process.
Since we assume an \emph{equirecursive} and \emph{non-generative}
view of types, it is easy to show that internal renaming does not
alter the communication behavior of types and preserves type
equality. Formally, $A \equiv_{\Sg} B$ iff $A \equiv_{\Sg'} B$.

\begin{theorem}
\label{thm:equiv-grammars}
	${A}\equiv_{\Sg}{B}$\ if and only if 
	$\llparenthesis A\rrparenthesis \sim_{\mathcal{S}}
	\llparenthesis B\rrparenthesis$,
	where
	$(\mathcal{N}, \mathcal{A}, \mathcal{S}) = \m{fog}(\Sigma')$
	and $\Sigma'$ is the extended signature for $\Sg$.
\end{theorem}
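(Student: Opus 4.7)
The plan is to prove the two directions separately, after first establishing the auxiliary claim that internal renaming preserves type equality, i.e., $A \equiv_{\Sg} B$ iff $A \equiv_{\Sg'} B$. This is intuitively clear since internal renaming only introduces fresh names whose definitions are unfoldings of subterms present in $\Sg$; formally, I would show that the bisimulation relation on closed types is insensitive to the introduction of such fresh abbreviations, by building a bisimulation between $A$ and its renamed counterpart that relates each newly introduced $V[\overline{B}]$ to its defining expression under the substitution $[\overline{B}/\overline{\alpha}]$.

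After this reduction, the core task is to relate the coinductive type bisimulation of Definition~\ref{def:rel} to the trace equivalence of the deterministic first-order grammar $\m{fog}(\Sg')$. The key observation is that the translation $\llparenthesis \cdot \rrparenthesis$ commutes with substitution, i.e., $\llparenthesis A[\overline{B}/\overline{\alpha}] \rrparenthesis = \llparenthesis A \rrparenthesis[\llparenthesis \overline{B} \rrparenthesis/\overline{\alpha}]$, which I would prove by a straightforward induction on $A$ (aided by the fact that after renaming, the nontrivial cases for $A$ are type names). Combined with the definition of $\tau$, this gives the correspondence: a single unfolding step $V[\overline{B}] \leadsto A[\overline{B}/\overline{\alpha}]$ followed by reading off one of the structural constructors of $A$ corresponds to applying exactly one grammar rule in $\mathcal{S}$.

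For the direction $A \equiv_{\Sg'} B \Rightarrow \llparenthesis A \rrparenthesis \sim_{\mathcal{S}} \llparenthesis B \rrparenthesis$, I would start from a type bisimulation $\rel$ containing $(A, B)$ and show by induction on the length of a trace $\overline{a} \in \mathcal{A}^*$ that $\llparenthesis A \rrparenthesis \xrightarrow{\overline{a}} T_A$ iff $\llparenthesis B \rrparenthesis \xrightarrow{\overline{a}} T_B$ for some $T_A, T_B$ with the underlying types still related by $\rel$. The actions $\ichoiceop \ell, \echoiceop \ell, \tensor_i, \lolli_i$ are in one-to-one correspondence with the branching required by Definition~\ref{def:rel}, so the matching of traces follows by case analysis on $\unfold{A}$. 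For the converse direction $\llparenthesis A \rrparenthesis \sim_{\mathcal{S}} \llparenthesis B \rrparenthesis \Rightarrow A \equiv_{\Sg'} B$, I would define $\rel = \{(A', B') \mid \llparenthesis A' \rrparenthesis \sim_{\mathcal{S}} \llparenthesis B' \rrparenthesis\}$ on closed types and verify the five clauses of Definition~\ref{def:rel}: if $\unfold{A'}$ starts with a given constructor, the determinism of the grammar and the injectivity of the action alphabet on constructors forces $\unfold{B'}$ to start with the same constructor, and the trace equivalence of the one-step continuations lifts to $\rel$-relatedness of the type continuations.

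The main obstacle is handling the interaction of free type variables with the closed-type definition of equality in Definition~\ref{def:tpeq_vars}. Grammar terms may contain variables $\overline{\alpha}$ while the type bisimulation is defined only on closed types. I plan to address this by arguing that $\llparenthesis A \rrparenthesis \sim_{\mathcal{S}} \llparenthesis B \rrparenthesis$ holds for open terms precisely when it holds after every closing substitution (since variables act as inert non-terminals with no rules, any divergence on open terms would already appear on a closed instantiation and vice versa), which aligns exactly with the quantification in Definition~\ref{def:tpeq_vars}. Combined with the substitution lemma for $\llparenthesis \cdot \rrparenthesis$, this lets both directions of the biconditional close uniformly, completing the proof.
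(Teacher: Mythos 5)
Your two directions for closed types essentially reproduce the paper's argument: the converse is the paper's proof almost verbatim (define $\rel = \{(A',B') \mid \llparenthesis A'\rrparenthesis \sim_{\mathcal{S}} \llparenthesis B'\rrparenthesis\}$ on closed types and verify the clauses of Definition~\ref{def:rel} using determinism of the grammar), and your forward direction by induction on trace length is an equivalent rendering of the paper's contrapositive argument via the greatest common prefix of a distinguishing trace. The substitution lemma $\llparenthesis A[\overline{B}/\overline{\alpha}]\rrparenthesis = \llparenthesis A\rrparenthesis[\llparenthesis \overline{B}\rrparenthesis/\overline{\alpha}]$ and the fact that internal renaming preserves $\equiv$ are indeed the right supporting facts (the paper treats both as immediate), so making them explicit is fine.

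However, your last paragraph---which you single out as the main obstacle and rely on to ``close'' the biconditional---asserts something false. Trace equivalence of \emph{open} terms, with variables as inert non-terminals, does \emph{not} coincide with trace equivalence under all closing substitutions: an inert $\alpha$ has empty trace, so $\alpha \sim_{\mathcal{S}} \bot = \llparenthesis \one \rrparenthesis$ and $\alpha \sim_{\mathcal{S}} \beta$ for $\beta \neq \alpha$, yet substituting $\ichoice{\mb{a} : \one}$ for $\alpha$ immediately separates the traces. Fortunately the obstacle is illusory for this theorem: $A$ and $B$ are closed, and applying a rule $X\overline{\alpha} \xrightarrow{a} E$ to a closed term $X\overline{B}$ yields the closed term $E[\overline{B}/\overline{\alpha}]$, so every term reachable from $\llparenthesis A\rrparenthesis$ or $\llparenthesis B\rrparenthesis$ is again the translation of a closed type and no reasoning about open terms is needed. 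The open case is handled separately in Theorem~\ref{thm:open-equiv-grammars}, and precisely because inert variables are not faithful, that theorem substitutes fresh types $A_\alpha = \ichoice{\ell_\alpha : A_\alpha}$ so that each variable carries a distinct observable action rather than being left uninterpreted. Drop or repair that paragraph; the rest of your proposal matches the paper's proof.
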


\begin{proof}
	For the direct implication, assume that 
	$\llparenthesis A\rrparenthesis \not\sim_{\mathcal{S}}
	\llparenthesis B\rrparenthesis$.
	Pick a sequence of actions in the difference of the traces and let
	$w_0$ be its greatest prefix occurring in both traces.
	Either $w_0$ is a maximal trace for one of the terms,
	or we have 
	$\llparenthesis A\rrparenthesis \xrightarrow{w_0}\llparenthesis A'\rrparenthesis$
	and 
	$\llparenthesis B\rrparenthesis \xrightarrow{w_0}\llparenthesis B'\rrparenthesis$,
	with 
	$\llparenthesis A'\rrparenthesis \xrightarrow{a_1}\llparenthesis A''\rrparenthesis$
	and
	$\llparenthesis B'\rrparenthesis \xrightarrow{a_2}\llparenthesis B''\rrparenthesis$,
	where $a_1\neq a_2$. In both cases, we have $A' \not\equiv B'$. To show that,
	let us proceed by case analysis on $A'$ assuming that $A' \equiv B'$. 
	
	Case $\unfold{A'}= \ichoice{\ell \colon A_\ell}_{\ell\in L}$.
	In this case, we would have 
	$\unfold{B'}= \ichoice{\ell \colon B_\ell}_{\ell\in L}$. Hence, 
	we would have $a_1 = \oplus \ell$ for some $\ell\in L$ and $w=w_0\cdot a_1$
	would occur in both traces and would be greater than $w_0$, which is a 
	contradiction.

	Case $\unfold{A'}= \echoice{\ell \colon A_\ell}_{\ell\in L}$. 
	Similar to the previous case.
	
	Case $\unfold{A'}= A_1 \tensor A_2$. In this case we would have
	$\unfold{B'}= B_1 \tensor B_2$. Hence, $a_1\in \{\tensor_1, \tensor_2\}$
	and we would have $w=w_0 \cdot a_1$ occurring in both traces, which contradicts 
	the assumption that $w_0$ is the greatest prefix occurring in both traces.
	
	Case $\unfold{A'}= A_1 \lolli A_2$. Similar to the 
	previous case.
	
	Case $\unfold{A'}= \one$. In this case, we would have $\unfold{B'} = \one$.
	Hence, $w_0$ would be the maximal trace for both terms, which is a 
	contradiction with the fact that $w_0$ is a prefix of a sequence of actions
	in the difference of the traces.
	
	Since all cases led to contradictions, we have $A'\not\equiv B'$. 
	The conclusion that $A \not \equiv B$, follows immediately from the property: 
	if 
	$\llparenthesis A_0\rrparenthesis \xrightarrow{w}\llparenthesis A_1\rrparenthesis$
	and $\llparenthesis B_0\rrparenthesis \xrightarrow{w}\llparenthesis B_1\rrparenthesis$
	and $A_1\not\equiv B_1$,
	then $A_0 \not\equiv B_0$. We prove this property by induction on the length of $w$.
	If $|w| = 0$, then $A_1$ coincides with $A_0$ and $B_1$
	coincides with $B_0$, so $A_0\not\equiv B_0$. Now, let $n>0$ and 
	assume the property holds
	for any trace of length $n$. Consider $w = w' \cdot a$ with $|w'| = n$ and 
	let $A_2$, $B_2$ be s.t.
	$\llparenthesis A_0\rrparenthesis \xrightarrow{w' }\llparenthesis A_2\rrparenthesis
	\xrightarrow{a}\llparenthesis A_1\rrparenthesis$
	and
	$\llparenthesis B_0\rrparenthesis \xrightarrow{w'}\llparenthesis B_2\rrparenthesis
	\xrightarrow{a}\llparenthesis B_1\rrparenthesis$.
%	
%	we proceed 
%	by induction on the length of $w_0$. If $|w_0|=0$, $A$ coincides with $A'$ and $B$
%	coincides with $B'$, so $A\not\equiv B$. Let $n>0$ and 
%	assume that: if $A_1\not\equiv B_1$
%	and $\llparenthesis A_0\rrparenthesis \xrightarrow{w}\llparenthesis A'\rrparenthesis$
%	and $\llparenthesis B_0\rrparenthesis \xrightarrow{w}\llparenthesis B'\rrparenthesis$,
%	with $|w|=n$,
%	then $A_0 \not\equiv B_0$.
%%	for any
%%	$A_0, B_0$ s.t. 
%%	$\llparenthesis A_0\rrparenthesis \xrightarrow{w}\llparenthesis A'\rrparenthesis$
%%	and
%%	$\llparenthesis B_0\rrparenthesis \xrightarrow{w}\llparenthesis B'\rrparenthesis$ 
%%	with $|w|=n$.
%	Now, let $w_0 = w_1\cdot a$ with $|w_1| = n$ and $A'''$, $B'''$ be s.t.
%	$\llparenthesis A\rrparenthesis \xrightarrow{w_1}\llparenthesis A'''\rrparenthesis
%	\xrightarrow{a}\llparenthesis A'\rrparenthesis$
%	and
%	$\llparenthesis B\rrparenthesis \xrightarrow{w_1}\llparenthesis B'''\rrparenthesis
%	\xrightarrow{a}\llparenthesis B'\rrparenthesis$.
	With a case analysis on $A_2$, similar to the analysis above, since $A_1\not\equiv B_1$,
	we conclude that $A_2 \not\equiv B_2$. By induction hypothesis we have $A_0\not\equiv B_o$.
	
	For the reciprocal implication, assume that 
	$\llparenthesis A\rrparenthesis \sim_{\mathcal{S}}	\llparenthesis B\rrparenthesis$.
	Consider the relation 
	\begin{equation*}
		\mathcal{R} = \{(A_0,B_0) \mid \m{trace}_{\mathcal{S}}(\llparenthesis A_0\rrparenthesis) = \m{trace}_{\mathcal{S}}(\llparenthesis B_0\rrparenthesis)\}\subseteq  \mi{Type} \times \mi{Type}.
	\end{equation*}
	Obviously, $(A,B)\in \mathcal{R}$. To prove that $\mathcal{R}$ is
	a type bisimulation, let $(A_0, B_0)\in \mathcal{R}$ and proceed 
	by case analysis on $A_0$ and $B_0$. We sketch a couple of cases for $A_0$.
	The other cases are analogous.
	
	Case $\unfold{A_0} = \ichoice{\ell \colon A_\ell}_{\ell\in L}$.
	In this case we have 
	$\llparenthesis A_0\rrparenthesis\xrightarrow{\ichoiceop \ell} \llparenthesis A_\ell\rrparenthesis$.
	Since, by hypothesis, the traces coincide,
	$\m{trace}_{\mathcal{S}}(\llparenthesis A_0\rrparenthesis) = \m{trace}_{\mathcal{S}}(\llparenthesis B_0\rrparenthesis)$,
	we have
	$\llparenthesis B_0\rrparenthesis\xrightarrow{\ichoiceop \ell} \llparenthesis B_\ell\rrparenthesis$ and,
	thus, $\unfold{B_0} = \ichoice{\ell \colon B_\ell}_{\ell\in L}$.
	Moreover, using Observation 3 of Jan\v{c}ar~\cite{Jancar10},
	we have
	$\m{trace}_{\mathcal{S}}(\llparenthesis A_\ell\rrparenthesis) = \m{trace}_{\mathcal{S}}(\llparenthesis B_\ell\rrparenthesis)$.
	Hence, $(A_\ell,B_\ell)\in \mathcal{R}$.

	Case $\unfold{A_0} = \one$. In this case,
	$\m{trace}_{\mathcal{S}}(\llparenthesis A_0\rrparenthesis) = \m{trace}_{\mathcal{S}}(\bot) = \emptyset $.
	Since $B_0$ is a closed type and 
	$\m{trace}_{\mathcal{S}}(\llparenthesis A_0\rrparenthesis) = \m{trace}_{\mathcal{S}}(\llparenthesis B_0\rrparenthesis)$
	and the types are contractive,
	we have $\unfold{B_0} = \one$.
\end{proof}

However, type equality is not only restricted to closed types
(see~\autoref{def:tpeq_vars}).
To decide equality for open types, i.e.
$\forall \vars. \, A \equiv B$ given signature $\Sg$,
we introduce a fresh label $\ell_{\alpha}$ and type
$A_{\alpha}$ for each $\alpha \in \vars$.
We extend the signature with type definitions: 
$\Sigma^* = \Sigma \cup_{\alpha \in \vars} \{A_\alpha =
\ichoice{\ell_\alpha \colon A_\alpha}\}$.
We then replace all occurrences of $\alpha$ in $A$
and $B$ with $A_{\alpha}$ and check their equality with
signature $\Sg^*$.
We prove that this substitution preserves equality.
\begin{theorem}
\label{thm:open-equiv-grammars}
	$\forall \vars. \, A \equiv_{\Sigma} B$ iff 
	$A[\sigma^*] \equiv_{\Sigma^*} B[\sigma^*]$ where 
	$\sigma^*(\alpha) = A_\alpha$
	for all $\alpha\in \vars$.
\end{theorem}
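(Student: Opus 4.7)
The plan is to prove both implications by explicitly constructing bisimulations that transfer equality between $\Sigma$ and $\Sigma^*$. The pivotal property of $\Sigma^*$ is that each definition $A_\alpha = \ichoice{\ell_\alpha : A_\alpha}$ emits the fresh label $\ell_\alpha$ at its head, and this label appears nowhere in $\Sigma$. Consequently, under $\Sigma^*$ the only closed types bisimilar to $A_\alpha$ are those whose unfolding through $\Sigma$-definitions reaches $A_\alpha$ itself, which lets $A_\alpha$ serve as a generic witness for the variable $\alpha$.

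For the forward direction, assuming $\forall \vars.\, A \equiv_\Sigma B$, I would first prove a monotonicity lemma: any bisimulation over $\Sigma$ remains a bisimulation over $\Sigma^*$, since the new definitions are independent of $\Sigma$. I would then consider the candidate relation
\[
\mathcal{R}^* = \{(C[\sigma^*], D[\sigma^*]) \mid \mathrm{FV}(C), \mathrm{FV}(D) \subseteq \vars,\ \forall \sigma.\, C[\sigma] \equiv_\Sigma D[\sigma]\} \cup \Delta^*,
\]
where $\Delta^*$ is the diagonal on closed types over $\Sigma^*$, and verify $\mathcal{R}^*$ is a bisimulation under $\Sigma^*$ by case analysis on $\unfold{C[\sigma^*]}$. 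Structural cases transfer directly from the hypothesis, since equality is preserved by structural constructors. In the critical case where $C$ unfolds to a variable $\alpha$ (so $C[\sigma^*]$ unfolds to $A_\alpha$), I would argue by instantiating $\sigma$ with $\Sigma$-closed types whose head labels distinguish all variables, forcing $D$ to structurally route to $\alpha$ and thus $D[\sigma^*]$ also to unfold to $A_\alpha$.

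For the reverse direction, let $\mathcal{R}^*$ witness $A[\sigma^*] \equiv_{\Sigma^*} B[\sigma^*]$ and fix any closed substitution $\sigma$ over $\Sigma$. I would define
\[
\mathcal{R}_\sigma = \{(C[\sigma], D[\sigma]) \mid \mathrm{FV}(C), \mathrm{FV}(D) \subseteq \vars,\ (C[\sigma^*], D[\sigma^*]) \in \mathcal{R}^*\} \cup \Delta_\Sigma,
\]
where $\Delta_\Sigma$ is the diagonal on closed types over $\Sigma$, and verify $\mathcal{R}_\sigma$ is a bisimulation under $\Sigma$. Structural cases lift because substitution commutes with type formers. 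The variable case uses the dual alignment property: if $C$ unfolds to $\alpha$, then $C[\sigma^*]$ unfolds to $A_\alpha$ producing $\ell_\alpha$; by $\mathcal{R}^*$ the term $D[\sigma^*]$ must also produce $\ell_\alpha$; but since $\ell_\alpha$ is fresh to $\Sigma$, the only way $D[\sigma^*]$ can produce it is through a substituted occurrence of $A_\alpha$, which tracks back to $D$ itself unfolding to $\alpha$. Hence both $C[\sigma]$ and $D[\sigma]$ reduce to $\sigma(\alpha)$, which is related via the diagonal.

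The hard part will be the two alignment claims---making precise the intuition that the bisimulation exactly tracks the positions of the type variables in $A$ and $B$. These require an inductive argument on the unfolding structure, combining contractivity of definitions (for well-foundedness of the analysis) with the freshness of the labels $\ell_\alpha$ (to rule out $\Sigma$-definitions accidentally producing them through their parameters).
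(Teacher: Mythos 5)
Your proposal is correct, but it takes a genuinely different route from the paper's proof. The paper treats the forward direction as immediate ($\sigma^*$ is just one closed substitution, reading equality over the extended signature), and proves the converse contrapositively using the trace view of types set up for Theorem~\ref{thm:equiv-grammars}: if some $\sigma'$ separates $A[\sigma']$ from $B[\sigma']$ because of the substitution, a maximal common trace is truncated to a prefix $w_0$ along which $A$ reaches $\beta$ and $B$ reaches $\gamma$ with $\beta\neq\gamma$, so $A[\sigma^*]$ and $B[\sigma^*]$ reach $A_\beta$ and $A_\gamma$, which differ because $\ell_\beta$ and $\ell_\gamma$ are distinct. You instead argue both directions directly by constructing bisimulations ($\mathcal{R}^*$ and $\mathcal{R}_\sigma$, each padded with a diagonal, which is exactly what the variable case needs); the key insight is the same---freshness of $\ell_\alpha$ makes $A_\alpha$ a faithful generic witness for $\alpha$, forcing both types to sit at the same variable whenever one does---but you avoid any appeal to traces or grammars, so your argument is self-contained relative to Definition~\ref{def:rel}. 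Your forward direction is heavier than the paper's one-liner, and defensibly so: $A_\alpha$ carries a label foreign to $\Sigma$, so $\sigma^*$ is not literally an instance of the quantification over $\Sigma$-closed substitutions, and your instantiation with closed types bearing distinct fresh head labels supplies the uniformity step the paper leaves implicit. The cost is that everything rests on the two deferred alignment lemmas; note that the same freshness argument is also needed in your structural cases, to rule out that one side routes to a variable while the other exposes a $\Sigma$-connective (otherwise the matching of heads and label sets in $\mathcal{R}^*$ would not yield continuations of the required substituted form). With that spelled out, both constructions go through: the paper's argument is shorter given the grammar machinery already in place, while yours is more elementary and makes the forward direction fully rigorous.
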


\begin{proof}
	The direct implication is immediate because $\sigma^*$ is a 
	closed substitution. For the reciprocal implication, assume that
	$\forall \vars. \, A \not\equiv_{\Sigma} B$. Either for any 
	closed substitution $\sigma: \vars$, $A[\sigma] \not\equiv B[\sigma]$,
	in which case $A[\sigma^*] \not\equiv B[\sigma^*]$; or there exists
	$\sigma':\vars$ s.t. $A[\sigma'] \not\equiv B[\sigma']$.
	In the latter, there is a distinct trace for $A[\sigma']$ and 
	$B[\sigma']$, resulting from the substitution. Thus, a maximal
	trace $w_1$ belonging to both $\m{trace}(A[\sigma'])$ and 
	$\m{trace}(B[\sigma'])$ leads to a subterm $C$ of $\sigma'(\beta)$
	and to a subterm $D$ of $\sigma'(\gamma)$, respectively:
	$A[\sigma']\xrightarrow{w_1} C$ and 
	$B[\sigma']\xrightarrow{w_1} D$, where $\beta\neq \gamma$.
	In that case, there is a \emph{subtrace} $w_0$ of $w_1$ 
	such that $A\xrightarrow{w_0} \beta$ 
	and $B\xrightarrow{w_0} \gamma$ . Hence, we conclude that 
	$A[\sigma^*]\xrightarrow{w_0} A_\beta$ and 
	$B[\sigma^*]\xrightarrow{w_0} A_\gamma$ and 
	$A_\beta \not\equiv_{\Sigma^*} A_\gamma$,
	because $\ell_\beta$ and $\ell_\gamma$ are distinct labels.
\end{proof}

\begin{theorem}
\label{thm:type-equiv-decidable}
	Checking $\forall \vars. \, A \equiv B$ is decidable.
\end{theorem}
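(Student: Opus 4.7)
The plan is to chain together the two previous theorems with Jan\v{c}ar's decidability result for deterministic first-order grammars, producing an effective procedure for deciding $\forall \vars. \, A \equiv B$.

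First, I would reduce the open type equality problem to a closed one. Given $\vars$, $A$, $B$ and the signature $\Sigma$, the procedure builds the extended signature $\Sigma^*$ by introducing, for each $\alpha \in \vars$, a fresh label $\ell_\alpha$ and a fresh defined type $A_\alpha = \ichoice{\ell_\alpha : A_\alpha}$, and then forms the closed types $A[\sigma^*]$ and $B[\sigma^*]$ where $\sigma^*(\alpha) = A_\alpha$. By Theorem~\ref{thm:open-equiv-grammars}, it suffices to decide $A[\sigma^*] \equiv_{\Sigma^*} B[\sigma^*]$. Note that this step is purely syntactic and clearly computable.

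Next, I would translate the closed equality problem into a trace equivalence question on a deterministic first-order grammar. Concretely, apply the internal renaming procedure of Figure~\ref{fig:renaming_rules} to $\Sigma^*$ to obtain an extended signature $\Sigma^{**}$ (also introducing internal names for $A[\sigma^*]$ and $B[\sigma^*]$ if they happen to be structural), then compute $\m{fog}(\Sigma^{**}) = (\mathcal{N}, \mathcal{A}, \mathcal{S})$. Both the renaming and the $\m{fog}$ translation are terminating, effective procedures, and by construction the resulting grammar is deterministic. By Theorem~\ref{thm:equiv-grammars}, $A[\sigma^*] \equiv_{\Sigma^*} B[\sigma^*]$ iff $\llparenthesis A[\sigma^*] \rrparenthesis \sim_{\mathcal{S}} \llparenthesis B[\sigma^*] \rrparenthesis$.

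Finally, invoke Jan\v{c}ar's theorem~\cite{jancar2012bisimilarity}, which states that trace equivalence of deterministic first-order grammars is decidable. This gives a decision procedure for the final trace equivalence question, and composing the three steps yields a decision procedure for $\forall \vars. \, A \equiv B$. The only ``hard'' part is really the appeal to Jan\v{c}ar's doubly-exponential algorithm; the rest is a routine, if careful, verification that the two preceding reductions are computable and compose correctly. This is precisely why the paper later proposes a sound-but-incomplete practical algorithm: the abstract decidability argument is straightforward once Theorems~\ref{thm:equiv-grammars} and~\ref{thm:open-equiv-grammars} are in hand, but the resulting procedure is not directly suitable for implementation.
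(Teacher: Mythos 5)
Your proposal is correct and follows essentially the same route as the paper's proof: reduce open to closed equality via Theorem~\ref{thm:open-equiv-grammars}, reduce closed equality to trace equivalence of a deterministic first-order grammar via Theorem~\ref{thm:equiv-grammars}, and conclude by Jan\v{c}ar's decidability result. The extra remarks on the effectiveness of the renaming and translation steps are fine but add nothing beyond the paper's argument.
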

\begin{proof}
	\autoref{thm:open-equiv-grammars} reduces equality of open types
	to equality of closed types.
	\autoref{thm:equiv-grammars} reduces equality of closed nested
	session types to trace equivalence of
	first-order grammars. Jan\v{c}ar~\cite{Jancar10} proved that
	trace equivalence for first-order grammars is decidable,
	hence establishing the decidability of equality for nested
	session types.
\end{proof}

\section{Practical Algorithm for Type Equality}\label{sec:algorithm}
Although type equality can be reduced to trace equivalence for
first-order grammars (\autoref{thm:equiv-grammars}),
the latter problem has a very high theoretical complexity
with no known practical algorithm~\cite{Jancar10}.
% In Theorem~\ref{thm:type-equiv-decidable} we proved that the type
% equality problem for polymorphic session types is decidable by reducing
% it to the trace equivalence problem in first-order grammars.
% Due to the high theoretical complexity of the latter~\cite{Jancar10},
In response, we have designed a coinductive algorithm
for approximating type equality. Taking inspiration from Gay and
Hole~\cite{Gay2005}, we attempt to construct a bisimulation.
Our proposed algorithm is sound but incomplete and can terminate in
three states:
\emph{(i)} types are proved equal by constructing a bisimulation,
\emph{(ii)} counterexample detected by identifying the position where
types differ, or
\emph{(iii)} terminated without a conclusive answer due to incompleteness.
We interpret both \emph{(ii)} and \emph{(iii)} as a failure of
type-checking (but there is a recourse; see
Section~\ref{sec:eqtypes}).
% Our algorithm is expressed as a set of
% inference rules where the execution of the algorithm corresponds to
% the bottom-up construction of a deduction.
The algorithm is
deterministic (no backtracking) and the implementation is quite
efficient in practice. For all our examples, type checking is
instantaneous (see Section~\ref{sec:impl}).

The fundamental operation in the equality algorithm is \emph{loop
detection} where we determine if we have already added
an equation $A \equiv B$ to the bisimulation we are constructing.
Due to the presence \emph{open types} with free type variables,
determining if we have considered an equation already becomes a
difficult operation.
To that purpose, we make an initial pass over the given types and introduce
fresh \emph{internal names} abstracted over their free type variables.
In the resulting signature defined type
names and type operators alternate and we can perform loop
detection entirely on defined type names (whether internal or external).

% \footnote{\it maybe cut this example?}

\begin{example}[Queues]\label{ex:queue}
  After creating internal names $\%i$ for the type
  $\m{queue}[\alpha] = \echoice{\mb{ins} : \alpha \lolli \m{queue}[\alpha],
  \mb{del} : \ichoice{\mb{none} : \one, \mb{some} : \alpha \tensor \m{queue}[\alpha]}}$ we obtain
  the following signature (note the alternation between type names and operators).
  \begin{sill}
    \quad $\m{queue}[\alpha] = {\with}\{\mb{ins} : \%0[\alpha], \mb{del} : \%2[\alpha]\}$ \= 
    \quad $\%0[\alpha] = \%1[\alpha] \lolli \m{queue}[\alpha]$ \\
    \quad $\%1[\alpha] = \alpha$
    \quad $\%2[\alpha] = {\oplus}\{\mb{none} : \%3, \mb{some} : \%4[\alpha]\}$
    \quad $\%3 = \one$ \\
    \quad $\%4[\alpha] = \%5[\alpha] \tensor \m{queue}[\alpha]$
    \quad $\%5[\alpha] = \alpha$
  \end{sill}
\end{example}

Based on the invariants established by internal names, the algorithm
only needs to alternately compare two type names or two \emph{structural types},
i.e., types with an operator on the head.  The
rules are shown in \autoref{fig:tpeq_rules}.  The judgment has the form
$\vars \semi \G \vdash_{\Sg} A \equiv B$ where $\vars$ contains
the free type variables in the types
$A$ and $B$, $\Sg$ is a fixed \emph{valid} signature containing type definitions
of the form $V \indv{\alpha} = C$,
and $\G$ is a collection of \emph{closures}
$\clo{\vars}{V_1 \indv{A_1} \equiv V_2\indv{A_2}}$.
If a derivation can be constructed,
all \emph{closed instances} of all closures are included in the
resulting bisimulation (see the proof of \autoref{thm:tpeq_sound}).  A
closed instance of closure $\clo{\vars}{V_1 \indv{A_1} \equiv V_2 \indv{A_2}}$
is obtained by applying a closed substitution $\sigma$ over variables
in $\vars$, i.e., $V_1 \indv{A_1[\sigma]} \equiv V_2 \indv{A_2[\sigma]}$
such that the types $V_1 \indv{A_1[\sigma]}$ and $V_2 \indv{A_2[\sigma]}$
have no free type variables.
Because the signature $\Sg$ is fixed, we elide it from the rules in
\autoref{fig:tpeq_rules}.

\begin{figure}[t]
\begin{mathpar}
  \infer[\oplus]
  {\vars \semi \G \vdash
  \ichoice{\ell : A_\ell}_{\ell \in L} \equiv \ichoice{\ell : B_\ell}_{\ell \in L}}
  {\vars \semi \G \vdash
  A_\ell \equiv B_\ell \quad (\forall \ell \in L)}
  \and
  \infer[\with]
  {\vars \semi \G \vdash
  \echoice{\ell : A_\ell}_{\ell \in L} \equiv \echoice{\ell : B_\ell}_{\ell \in L}}
  {\vars \semi \G \vdash
  A_\ell \equiv B_\ell \quad (\forall \ell \in L)}
  \and
  \infer[\tensor]
  {\vars \semi \G \vdash
  A_1 \tensor A_2 \equiv B_1 \tensor B_2}
  {\vars \semi \G \vdash A_1 \equiv B_1 \quad
  \vars \semi \G \vdash A_2 \equiv B_2}
  \and
  \infer[\lolli]
  {\vars \semi \G \vdash
  A_1 \lolli A_2 \equiv B_1 \lolli B_2}
  {\vars \semi \G \vdash A_1 \equiv B_1 \quad
  \vars \semi \G \vdash A_2 \equiv B_2}
  \and
  \infer[\one]
  {\vars \semi \G \vdash \one \equiv \one}
  {}
  \and
  \infer[\m{var}]
  {\vars \semi \G \vdash \alpha \equiv \alpha}
  {\alpha \in \vars}
  \and
  \infer[\m{refl}]
  {\vars \semi \G \vdash V\indv{A} \equiv V \indv{A'}}
  {\vars \semi \G \vdash \overline{A} \equiv \overline{A'}}
  \and 
  \inferrule*[right = $\m{expd}$]
  {V_1 \indv{\alpha_1} = A \in \Sg \and
  V_2 \indv{\alpha_2} = B \in \Sg \and
  \cons = \clo{\vars}{V_1\indv{A_1} \equiv V_2\indv{A_2}} \\\\
  \vars \semi \G, \cons
  \vdash_{\Sg} A[\overline{A_1} / \overline{\alpha_1}] \equiv
  B[\overline{A_2}/\overline{\alpha_2}]}
  {\vars \semi \G \vdash_{\Sg}
  V_1 \indv{A_1} \equiv V_2 \indv{A_2}}
  \and
  \inferrule*[right=$\m{def}$]
  {
    \clo{\vars'}{V_1 \indv{A_1'} \equiv V_2 \indv {A_2'}} \in \G \\
    \exists \sigma' : \vars'. \, \left(\vars \semi \G \vdash V_1 \indv{A_1'[\sigma']}
    \equiv V_1 \indv{A_1} \land \vars \semi \G \vdash V_2 \indv{A_2'[\sigma']} \equiv
    V_2 \indv{A_2}\right)
  }
  {
    \vars \semi \G \vdash V_1 \indv{A_1} \equiv V_2 \indv{A_2}
  }
\end{mathpar}
\vspace{-2em}
\caption{Algorithmic Rules for Type Equality}
\vspace{-2em}
\label{fig:tpeq_rules}
%\Description{TpEq Algorithm}
\end{figure}

In the type equality algorithm,
the rules for type operators simply compare the components.  If the
type constructors (or the label sets in the $\oplus$ and $\with$ rules)
do not match, then type equality fails having constructed a
counterexample to bisimulation. Similarly, two type variables are
considered equal iff they have the same name, as exemplified by the
$\m{var}$ rule. Finally, to account for $\alpha$-renaming, when
comparing explicitly quantified types (rule $\exists^\gamma$,
$\forall^\gamma$), we substitute $\alpha$ and $\beta$ by the
\emph{same fresh variable} $\gamma$.

The rule of reflexivity is needed explicitly here (but not in the
version of Gay and Hole) due to the incompleteness of the
algorithm: we may otherwise fail to recognize type names parameterized
with equal types as equal. Note that the $\m{refl}$ rule checks
a sequence of types.

Now we come to the key rules, $\m{expd}$ and $\m{def}$.  In the
$\m{expd}$ rule we expand the definitions of $V_1\indv{A_1}$ and
$V_2\indv{A_2}$, and add the closure
$\clo{\vars}{V_1\indv{A_1} \equiv V_2\indv{A_2}}$ to $\G$.
Since the equality of $V_1\indv{A_1}$ and
$V_2\indv{A_2}$ must hold for all its closed instances, the
extension of $\G$ with the corresponding closure remembers exactly that.

The $\m{def}$ rule only applies when there already exists a closure in $\G$
with the same type names $V_1$ and $V_2$. In that case, we try to find
a substitution $\sigma'$ over $\vars'$ such that $V_1 \indv{A_1}$ is equal
to $V_1 \indv{A_1'[\sigma']}$ and $V_2 \indv{A_2}$ is equal
to $V_2 \indv{A_2'[\sigma']}$. Immediately after, the $\m{refl}$
rule applies and recursively calls the equality algorithm
on both type parameters. Existence
of such a substitution ensures that any closed instance of
$\clo{\vars}{V_1 \indv{A_1} \equiv V_2 \indv{A_2}}$ is also a closed
instance of $\clo{\vars'}{V_1 \indv{A_1'} \equiv V_2 \indv{A_2'}}$,
which are already present in the constructed type bisimulation, and
we can terminate our equality check, having successfully
\emph{detected a loop}.

The algorithm so far is sound, but potentially non-terminating.
There are two points of non-termination:
\emph{(i)} when encountering name/name equations, we can use the
$\m{expd}$ rule indefinitely, and
\emph{(ii)} we call the type equality recursively
in the $\m{def}$ rule. To ensure termination in the former case, we restrict the
$\m{expd}$ rule so that for any pair of type names $V_1$ and $V_2$ there
is an upper bound on the number of closures of the form
$\clo{-}{V_1 [-] \equiv V_2 [-]}$ allowed in $\G$.
We define this upper bound as the \emph{depth bound} of the
algorithm and allow the programmer to specify this depth bound.
Surprisingly, a depth bound of 1 suffices for all of our examples.
% This also removes the overlap between the $\m{expd}$ and $\m{def}$ rules.
In the latter
case, instead of calling the general type equality algorithm, we introduce
the notion of \emph{rigid equality}, denoted by $\vars \semi \G \Vdash A \equiv B$.
The only difference between general and rigid equality is that we cannot employ
the $\m{expd}$ rule for rigid equality. Since the size of the types reduce
in all equality rules except for $\m{expd}$, this algorithm terminates.
% Thus, the new $\m{def}$ rule is
% \begin{mathpar}
%   \inferrule*[right=$\m{def}$]
%   {
%     \clo{\vars'}{V_1 \indv{A_1'} \equiv V_2 \indv {A_2'}} \in \G \\
%     \exists \sigma' : \vars'. \, \left(\vars \semi \G \Vdash_{V_1} \overline{A_1'[\sigma']}
%     \equiv \overline{A_1} \land \vars \semi \G \Vdash_{V_2} \overline{A_2'[\sigma']} \equiv
%     \overline{A_2}\right)
%   }
%   {
%     \vars \semi \G \vdash V_1 \indv{A_1} \equiv V_2 \indv{A_2}
%   }
% \end{mathpar}
% We subscript the rigid equality $\Vdash_V$ with the corresponding type name $V$
% to help construct the bisimulation (see Theorem~\ref{thm:tpeq_sound}).
When comparing two instantiated type names, our algorithm first tries
reflexivity, then tries to close a loop with $\m{def}$, and only if
neither of these is applicable or fails do we expand the definitions
with the $\m{expd}$ rule.
Note that if type names have no parameters, our algorithm specializes to Gay and
Hole's (with the small optimizations of reflexivity and internal
naming), which means our algorithm is sound and complete on monomorphic
types.

% \subsection{Soundness of the Type Equality Algorithm}
\paragraph{\textbf{Soundness.}}
We establish the soundness of the equality algorithm by constructing
a type bisimulation from a derivation of $\vars \semi \G \vdash A \equiv B$ by
\emph{(i)} collecting the conclusions of all the sequents,
and \emph{(ii)} forming all closed instances from them.

\begin{definition}
  Given a derivation $\mathcal{D}$ of
  $\vars \semi \G \vdash A \equiv B$, we define the set
  $\mathcal{S}(\mathcal{D})$ of closures.  For each sequent
  (regular or rigid) of the form
  $\vars \semi \G \vdash A \equiv B$ in $\mathcal{D}$,
  we include the closure
  $\clo{\vars}{A \equiv B}$ in $\mathcal{S}(\mathcal{D})$.
\end{definition}

\begin{lemma}[Closure Invariants]\label{lem:closure}
  For any valid derivation $\mathcal{D}$ with the set of closures $\SD$,
  \begin{itemize}
    \item If $\clo{\vars}{\ichoice{\ell : A_\ell}_{\ell \in L} \equiv
    \ichoice{\ell : B_\ell}_{\ell \in L}} \in \SD$ from $\ichoiceop$ rule, then
    $\clo{\vars}{A_\ell \equiv B_\ell} \in \SD$ for all $\ell \in L$.

    \item If $\clo{\vars}{\echoice{\ell : A_\ell}_{\ell \in L} \equiv
    \echoice{\ell : B_\ell}_{\ell \in L}} \in \SD$ from $\echoiceop$ rule, then
    $\clo{\vars}{A_\ell \equiv B_\ell} \in \SD$ for all $\ell \in L$.

    \item If $\clo{\vars}{A_1 \tensor A_2 \equiv
    B_1 \tensor B_2} \in \SD$ from $\tensor$ rule, then
    $\clo{\vars}{A_1 \equiv B_1} \in \SD$ and
    $\clo{\vars}{A_2 \equiv B_2} \in \SD$.

    \item If $\clo{\vars}{A_1 \lolli A_2 \equiv
    B_1 \lolli B_2} \in \SD$ from $\lolli$ rule, then
    $\clo{\vars}{A_1 \equiv B_1} \in \SD$ and
    $\clo{\vars}{A_2 \equiv B_2} \in \SD$.

    \item If $\clo{\vars}{V \indv{A_1} \equiv V \indv{A_2}} \in \SD$
    from $\m{refl}$ rule, then for each $\clo{\vars}{A_1^i \equiv A_2^i}
    \in \SD$ for each $i$ in $1 .. |\overline{A}|$.

    \item If $\clo{\vars}{V_1 \indv{A_1} \equiv V_2 \indv{A_2}} \in \SD$
    from $\m{expd}$ rule and $V_1 \indv{\alpha_1} = B_1 \in \Sg$ and
    $V_2 \indv{\alpha_2} = B_2 \in \Sg$, then
    $\clo{\vars}{B_1[\overline{A_1}/\alpha_1] \equiv
    B_2[\overline{A_2}/\alpha_2]} \in \SD$.
  \end{itemize}
\end{lemma}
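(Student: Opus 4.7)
The proof is essentially a routine case analysis on the rule that concludes the subderivation witnessing membership in $\SD$, so my plan is to make this completely explicit rather than trying to invoke any auxiliary induction. The key observation is that $\SD$ is defined by traversing \emph{every} sequent occurring in $\mathcal{D}$ and extracting its closure; hence whenever a closure $\clo{\vars}{A\equiv B}$ lies in $\SD$, there is a subderivation $\mathcal{D}'$ of $\mathcal{D}$ whose conclusion is exactly $\vars\semi\Gamma\vdash A\equiv B$ for some $\Gamma$. If that subderivation ends in a given rule $R$, then its \emph{immediate premises} are themselves sequents of $\mathcal{D}$, and so their conclusions also give rise to closures in $\SD$ by the very same definition.

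Concretely, I would enumerate the six bullets and for each one perform the following two-line argument. First, from $\clo{\vars}{A\equiv B}\in\SD$ together with the hypothesis that this closure was produced by rule $R$, inspect the premises of $R$ in Figure~\ref{fig:tpeq_rules}. Second, read off that each premise is of the form $\vars\semi\Gamma'\vdash A'\equiv B'$ (with possibly extended $\Gamma'$, e.g.\ $\Gamma,\cons$ in the $\m{expd}$ case), and conclude $\clo{\vars}{A'\equiv B'}\in\SD$ by the definition of $\SD$. For the $\ichoiceop$, $\echoiceop$, $\tensor$, $\lolli$ cases this is immediate because the premises list exactly one equation per component. For the $\m{refl}$ case, the premise $\vars\semi\Gamma\vdash\overline{A}\equiv\overline{A'}$ is to be read componentwise as a family of equalities $A_1^i\equiv A_2^i$ (one per parameter position), each of which is itself a sequent in $\mathcal{D}$, giving the stated family of closures. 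For the $\m{expd}$ case, the single premise is $\vars\semi\Gamma,\cons\vdash A[\overline{A_1}/\overline{\alpha_1}]\equiv B[\overline{A_2}/\overline{\alpha_2}]$, which is precisely the closure asserted.

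The only point deserving a little care is that the definition of $\SD$ preserves the left-hand context $\vars$ when forming closures (it does not record the $\Gamma$ part, which is exactly what we want, since $\Gamma$ only grows along a branch of the derivation). Thus the closures extracted from the premises carry the same $\vars$ as the conclusion, matching the statements of all six bullets on the nose.

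I do not foresee a genuine obstacle; the lemma is really a bookkeeping statement about the construction of $\SD$, designed to package the invariants needed later for the soundness proof (Theorem~\ref{thm:tpeq_sound}) so that $\SD$ can be shown to induce a type bisimulation once closed under substitution. The only mild subtlety worth flagging is the uniform handling of both regular and rigid sequents: since $\SD$ is defined to include the closure of \emph{every} sequent in $\mathcal{D}$ regardless of mode, the case analysis above works uniformly without distinguishing the two judgment forms.
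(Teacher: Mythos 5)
Your proposal is correct and matches the paper's (one-line) proof in substance: the paper proves the lemma "by induction on the type equality judgment," which amounts to exactly your case analysis on the last rule of the subderivation, noting that each premise is itself a sequent of $\mathcal{D}$ (with the same $\vars$, and with $\G$ possibly extended) and hence contributes its closure to $\SD$ by definition. Your remarks about componentwise reading of the $\m{refl}$ premise and the uniform treatment of regular and rigid sequents are consistent with the paper's definition of $\SD$, so no gap remains.
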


\begin{proof}
  By induction on the type equality judgment.
\end{proof}

\begin{theorem}[Soundness]\label{thm:tpeq_sound}
  If $\vars \semi \cdot \vdash A \equiv B$, then
  $\forall \vars.\, A \equiv B$. Consequently,
  if $\vars$ is empty, we get $A \equiv B$.
\end{theorem}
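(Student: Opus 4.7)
}
The plan is to construct a type bisimulation $\mathcal{R}$ directly from the derivation $\mathcal{D}$ of $\vars \semi \cdot \vdash A \equiv B$ and then invoke Definition~\ref{def:tpeq} together with Definition~\ref{def:tpeq_vars}. Concretely, I define
\[
\mathcal{R} \;=\; \{\,(A'[\sigma],\, B'[\sigma]) \mid \clo{\vars'}{A' \equiv B'} \in \SD \text{ and } \sigma : \vars' \text{ is a closed substitution}\,\}.
\]
Because the root sequent $\vars \semi \cdot \vdash A \equiv B$ contributes the closure $\clo{\vars}{A \equiv B}$ to $\SD$, every closed instance $(A[\sigma], B[\sigma])$ with $\sigma : \vars$ lies in $\mathcal{R}$. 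Once $\mathcal{R}$ is shown to be a type bisimulation, Definition~\ref{def:tpeq_vars} yields $\forall \vars.\, A \equiv B$, and specializing to an empty $\vars$ gives $A \equiv B$.

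The core task is verifying the bisimulation clauses of Definition~\ref{def:rel} for an arbitrary pair $(A'[\sigma], B'[\sigma]) \in \mathcal{R}$. I would proceed by case analysis on the last rule that produced the closure $\clo{\vars'}{A' \equiv B'}$ in $\mathcal{D}$. For the structural rules ($\oplus$, $\with$, $\tensor$, $\lolli$, $\one$), the types $A'$ and $B'$ are already structural with matching top constructors, so $\unfold{A'[\sigma]}$ and $\unfold{B'[\sigma]}$ agree on the head, and the Closure Invariants lemma hands us exactly the closures whose closed instances under $\sigma$ are the required component pairs in $\mathcal{R}$. For $\m{var}$ and the base case of $\one$, the equality is immediate. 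For $\m{refl}$, the lemma gives $\clo{\vars'}{A_1^i \equiv A_2^i} \in \SD$ for each parameter, and after unfolding both sides of $V\indv{A_1[\sigma]} \equiv V\indv{A_2[\sigma]}$ (both unfold via the same definition), the resulting pairs of subterms are closed instances of these closures. For $\m{expd}$, the lemma furnishes $\clo{\vars'}{B_1[\overline{A_1}/\overline{\alpha_1}] \equiv B_2[\overline{A_2}/\overline{\alpha_2}]} \in \SD$; since $\unfold{V_1\indv{A_1[\sigma]}} = B_1[\overline{A_1[\sigma]}/\overline{\alpha_1}]$ by Definition~\ref{def:unfold}, substitution commutes appropriately and we obtain the required pair in $\mathcal{R}$.

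The main obstacle is the $\m{def}$ rule, where the algorithm closes a loop rather than actually inspecting the unfoldings. Here the closure $\clo{\vars'}{V_1\indv{A_1} \equiv V_2\indv{A_2}}$ in $\SD$ is justified by an earlier closure $\clo{\vars''}{V_1\indv{A_1'} \equiv V_2\indv{A_2'}} \in \G$ together with a witness substitution $\sigma'$ such that the algorithm derives $\vars' \semi \G \vdash V_j\indv{A_j'[\sigma']} \equiv V_j\indv{A_j}$ for $j=1,2$. To handle this, I would first observe that the subderivations witnessing these two equalities are themselves included in $\mathcal{D}$, so by the Closure Invariants applied to their subsequent $\m{refl}$ steps, the closures $\clo{\vars'}{A_j'^i[\sigma'] \equiv A_j^i}$ are in $\SD$ for every parameter index $i$. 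Given a closed $\sigma : \vars'$, I compose $\sigma' \circ \sigma$ (after renaming variables in $\vars''$ appropriately) to produce a closed instantiation of the loop closure $\clo{\vars''}{V_1\indv{A_1'} \equiv V_2\indv{A_2'}}$; by induction on the unfolding behavior and using transitivity of $\mathcal{R}$ (established as an auxiliary property, or else by showing that $\mathcal{R}$ together with these derived pairs still forms a bisimulation), the required bisimulation successor pairs for $(V_1\indv{A_1[\sigma]}, V_2\indv{A_2[\sigma]})$ are obtained. Rigid equality $\Vdash$ is handled identically since only the $\m{expd}$ rule is disallowed, and none of the bisimulation reasoning for the other rules depends on which of the two judgments was used. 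I expect the bookkeeping for substitution composition and the transitivity argument at the $\m{def}$ rule to be the most delicate part; everything else follows a direct structural reading of the Closure Invariants lemma.
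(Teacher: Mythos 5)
Your overall strategy is the same as the paper's (collect the closures $\SD$, take all closed instances, verify the bisimulation clauses by case analysis on the rule that produced each closure, and handle $\m{def}$ by composing the witness substitution with the closed one), but the relation you construct is too small to be a type bisimulation, and the step where this surfaces is stated incorrectly. In the $\m{refl}$ case, with $V\indv{\alpha} = C \in \Sg$, unfolding $V\indv{A_1[\sigma]}$ and $V\indv{A_2[\sigma]}$ produces continuations $C_\ell[\overline{A_1[\sigma]}/\overline{\alpha}]$ versus $C_\ell[\overline{A_2[\sigma]}/\overline{\alpha}]$; by internal renaming these have the form $V_B\indv{A_1[\sigma]}$ versus $V_B\indv{A_2[\sigma]}$ for some (possibly internal) name $V_B$. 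These are \emph{not} closed instances of the parameter closures $\clo{\vars'}{A_1^i \equiv A_2^i}$ that the Closure Invariants lemma gives you (those only yield $(A_1^i[\sigma], A_2^i[\sigma])$), and in general they are instances of no closure in $\SD$ at all, since the derivation never compared the bodies at these arguments. The paper repairs exactly this by adding a congruence component $\rel_1 = \{(V\indv{A}, V\indv{B}) \mid V\indv{\alpha} = C \in \Sg \text{ and } (A^i, B^i) \in \rel_0 \text{ for all } i\}$ and working with the reflexive transitive closure of $\rel_0 \cup \rel_1$; one must then also check that $\rel_1$-pairs themselves satisfy the bisimulation clauses (which again uses the internal-renaming invariant). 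Your $\m{var}$ case has the same defect: the pair $(\sigma(\alpha), \sigma(\alpha))$ is in your relation, but its successors are pairs of identical closed types that need not be instances of any closure, so you need the full diagonal, i.e.\ the reflexive closure, which your definition omits.

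For the $\m{def}$ case you correctly identify substitution composition as the key, but ``transitivity of $\mathcal{R}$, established as an auxiliary property'' is not available: the raw relation of closed closure-instances is not transitive, and trying to prove it so would fail. The paper's move is to build transitivity in from the start --- take the reflexive transitive closure of $\rel_0 \cup \rel_1$ and observe that this preserves the (strong) bisimulation property --- and then chain $(V_1\indv{A_1[\sigma]}, V_1\indv{A_1'[\sigma\circ\sigma']})$, $(V_1\indv{A_1'[\sigma\circ\sigma']}, V_2\indv{A_2'[\sigma\circ\sigma']})$, and $(V_2\indv{A_2'[\sigma\circ\sigma']}, V_2\indv{A_2[\sigma]})$ inside that closure. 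Your second hedge (``showing that $\mathcal{R}$ together with these derived pairs still forms a bisimulation'') is the right instinct, but it has to be part of the definition of the candidate relation, not deferred; as written, the proof does not go through.
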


\begin{proof}
  Given a derivation $\mathcal{D}_0$ of $\vars_0 \semi \cdot \vdash A_0 \equiv B_0$,
  construct $\mathcal{S}(\mathcal{D}_0)$ and define relation $\rel_0$ as follows:
  \[
    \rel_0 = \{(A[\sigma], B[\sigma]) \mid \clo{\vars}{A \equiv B}
    \in \mathcal{S}(\mathcal{D}_0)\;\mbox{and}\;\sigma\;\mbox{over}\;\vars \label{bis:1}\}
  \]
  Then, construct $\rel_1$ as follows:
  \[
    \rel_1 = \{(V\indv{A}, V\indv{B}) \mid V\indv{\alpha} = C \in \Sg \text{ and }
    (A^i, B^i) \in \rel_0 \; \forall i \in 1..|A|\}
  \]
  Consider $\rel$ to be the \emph{reflexive transitive closure} of $\rel_0 \cup \rel_1$.
  Note that extending a relation by its reflexive transitive closure
  preserves its bisimulation properties since the bisimulation is strong.
  We prove that $\rel$ is a type bisimulation. Then our theorem follows
  since the closure
  $\clo{\vars_0}{A_0 \equiv B_0} \in \mathcal{S}(\mathcal{D}_0)$,
  and hence, for any closed substitution $\sigma$,
  $(A_0[\sigma], B_0[\sigma]) \in \rel$.

  To prove $\rel$ is a bisimulation, we
  consider $(A[\sigma], B[\sigma]) \in \rel$ where
  $\clo{\vars}{A \equiv B} \in \mathcal{S}(\mathcal{D}_0)$ for
  some $\sigma$ over $\vars$. We case analyze on the rule in the
  derivation which added the above closure to $\rel$.

  Consider the case where $\ichoiceop$ rule is applied.
  The rule dictates that $A = \ichoice{\ell : A_\ell}_{\ell \in L}$
  and $B = \ichoice{\ell : B_\ell}_{\ell \in L}$. Since
  $\clo{\vars}{A \equiv B} \in \mathcal{S}(\mathcal{D}_0)$, by Lemma~\ref{lem:closure},
  we obtain
  $\clo{\vars}{A_\ell \equiv B_\ell} \in \mathcal{S}(\mathcal{D}_0)$ for all
  $\ell \in L$. By the definition of $\rel$, we get that
  $(A_\ell[\sigma], B_\ell[\sigma]) \in \rel$. Also,
  $A[\sigma] = \ichoice{\ell : A_\ell[\sigma]}_{\ell \in L}$ and similarly,
  $B[\sigma] = \ichoice{\ell : B_\ell[\sigma]}_{\ell \in L}$.  Hence, $\rel$ satisfies the
  closure condition (case 1) from Definition~\ref{def:rel}.
  The cases for $\with$, $\tensor$, $\lolli$ and $\one$ are analogous.

  If the applied rule is $\m{var}$, then $A = \alpha$ and $B = \alpha$.
  In this case, the relation $\rel$ contains any $(\sigma, \sigma)$ for
  a ground session type $\sigma$.
  To prove $\rel$ is a type bisimulation, we need to subcase on the form of $\sigma$.
  For instance, if $\sigma$ is of the form $\ichoice{\ell : A_\ell}$,
  then we need to prove that $(A_\ell, A_\ell) \in \rel$.
  But since $A_\ell$ is a ground session type, the definition of $\rel$
  implies that it contains $(A_\ell, A_\ell)$.
  The remaining subcases are analogous.

  % Thus, for any substitution $\sigma$ over $\vars, k$ such that
  % $\cdot \proves \cons[\sigma]$, we get that $(A'[k/m][\sigma], B'[k/n][\sigma])
  % \in \rel$. Since $\sigma$ is over $\vars, k$, we know that $k$ is substituted
  % over a closed term, and since $\cons$ does not mention $k$, we know that $k$
  % can be substituted with any closed term. Thus, $\forall i \in \mathbb{N}$,
  % $(A'[i/k][k/m][\sigma], B'[i/k][k/n][\sigma]) \in \rel$. Thus, $\rel$ is a type
  % simulation.

  % Consider the case where $\m{refl}$ is applied. In this case, $A = V \indv{A_1}$
  % and $B = V \indv{A_2}$. Suppose the corresponding type definition in $\Sg$ is
  % $V\indv{\alpha} = C$. Thus, $\unfold{A} = C[\overline{A_1}/\overline{\alpha}]$
  % and $\unfold{B} = C[\overline{A_2}/\overline{\alpha}]$.
  % Again, we subcase on the form of $C$. We consider the representative subcase
  % where $C = \ichoice{\ell : C_\ell}_{\ell \in L}$.

  Consider the case where $\m{expd}$ rule is applied. In this case,
  $A = V_1 \indv{A_1}$ and $B = V_2 \indv{A_2}$ and $(A[\sigma], B[\sigma]) \in \rel$.
  Suppose the definitions are $V_1 \indv{\alpha_1} = B_1$ and $V_2 \indv{\alpha_2} = B_2$.
  Next, we havea $\unfold{A} = B_1[\overline{A_1}/\overline{\alpha_1}]$ and
  $\unfold{B} = B_2[\overline{A_2}/\overline{\alpha_2}]$. From Lemma~\ref{lem:closure},
  we conclude that $\clo{\vars}{B_1[\overline{A_1}/\overline{\alpha_1}] \equiv
  B_2[\overline{A_2}/\overline{\alpha_2}]} \in \SDO$.
  Since the next applied rule has to be one of $\ichoiceop, \echoiceop, \tensor, \lolli,
  \one, \m{var}$, we can use Lemma~\ref{lem:closure} again to obtain the closure conditions
  of a type bisimulation.

  The most crucial case is when the applied rule is $\m{def}$ as we attempt to
  close off the loop here. In this case,
  the second premise of the $\m{def}$ rule ensures that there exists a substituion
  $\sigma'$ over $\vars'$ which entails $\clo{\vars}{V_1 \indv{A_1'[\sigma']}, V_1 \indv{A_1}}
  \in \mathcal{S}({\mathcal{D}_0})$ and $\clo{\vars}{V_2 \indv{A_2'[\sigma']}, V_2 \indv{A_2}}
  \in \mathcal{S}({\mathcal{D}_0})$.
  To satisfy the closure condition, we need to prove
  $(V_1 \indv{A_1 [\sigma]}, V_2 \indv{A_2 [\sigma]}) \in \rel$ for any closed
  substitution $\sigma$ over $\vars$.
  The key lies in composing the closed substitution $\sigma$ with the substitution
  $\sigma' : \vars' \to \vars$ to obtain the closed substitution $\sigma \circ \sigma'$
  over $\vars'$.
  The closure $\clo{\vars}{V_1 \indv{A_1'[\sigma']}, V_1 \indv{A_1}}
  \in \mathcal{S}({\mathcal{D}_0})$ implies
  that $(V_1 \indv{A_1'[\sigma \circ \sigma']}, V_1 \indv{A_1 [\sigma]}) \in \rel$.
  The closure $\clo{\vars}{V_2 \indv{A_2'[\sigma']}, V_2 \indv{A_2}}
  \in \mathcal{S}({\mathcal{D}_0})$ implies
  $(V_2 \indv{A_2'[\sigma \circ \sigma']}, V_2 \indv{A_2 [\sigma]}) \in \rel$.
  The first premise states that
  $\clo{\vars'}{V_1 \indv{A_1'} \equiv
  V_2 \indv{A_2'}} \in \mathcal{S}(\mathcal{D}_0)$.
  This entails that $(V_1 \indv{A_1'[\sigma_0]},
  V_2 \indv{A_2'[\sigma_0]}) \in \rel$ for any substituion $\sigma_0$ over $\vars'$.
  Setting $\sigma_0 = \sigma \circ \sigma'$ entails
  $(V_1 \indv{A_1'[\sigma \circ \sigma']}, V_2 \indv{A_2'[\sigma \circ \sigma']}) \in \rel$.
  The transitive property of $\rel$ then ensures that
  $(V_1 \indv{A_1}, V_2 \indv{A_2}) \in \rel$.

  The last two cases concern reflexivity, one that comes from directly
  the closure obtained from applying the $\m{refl}$ rule, and the other
  comes from the relation $\rel_1$.
  First, consider the case $(V \indv{A_1[\sigma]}, V \indv{A_2[\sigma]}) \in \rel$ which is
  added due to the closure $\clo{\vars}{V \indv{A_1} \equiv V \indv{A_2}} \in \SDO$.
  Lemma~\ref{lem:closure} ensures that $\clo{\vars}{{A_1}^i \equiv {A_2}^i} \in \SDO$
  for each $i$, implying $({A_1}^i[\sigma], {A_2}^i[\sigma]) \in \rel_0$.
  And suppose $V \indv{\alpha} = B \in \Sg$. We subcase on the form of $B$.
  Consider the representative subcase where $B = \ichoice{\ell : B_\ell}_{\ell \in L}$.
  To prove $\rel$ is a bisimulation, we need to prove
  $(B_\ell[\overline{A_1[\sigma]}/\overline{\alpha}], B_\ell[\overline{A_2[\sigma]}/\overline{\alpha}])
  \in \rel$. Note however the internal renaming condition ensures that
  $B_\ell = V_B \indv{\alpha}$ for some (possibly internal) type name $V_B$.
  But then the definition of $\rel_1$ coupled with the consequence of
  Lemma~\ref{lem:closure} ensures that $(V_B \indv{A_1[\sigma]}, V_B \indv{A_2[\sigma]})
  \in \rel$, satisfying the closure condition.
  The other structural subcases are analogous.
  Consider the subcase where $B = \alpha$. Thus, $V \indv{A_1} = A_1^i$ and
  $V \indv{A_2} = A_2^i$, and since $({A_1}^i[\sigma], {A_2}^i[\sigma]) \in \rel_0$,
  they are contained in $\rel$ as well.
  
  A similar argument covers the latter case where $(V \indv{A_1[\sigma]}, V \indv{A_2[\sigma]})
  \in \rel$ due to $\rel_1$.

  Thus, $\rel$ is a bisimulation.
\end{proof}

\subsection{Type Equality Declarations}
\label{sec:eqtypes}

One of the primary sources of incompleteness in our algorithm is its
inability to \emph{generalize the coinductive hypothesis}.
As an illustration, consider the following two types $D$ and $D'$,
which only differ in the names, but have the same structure.
\begin{mathpar}
  \vspace{-0.5em}
  T[x] \triangleq \ichoice{\mb{L} : T[T[x]], \mb{R} : x} \and
  D \triangleq \ichoice{\mb{L} : T[D], \$ : \one} \\
  T'[x] \triangleq \ichoice{\mb{L} : T'[T'[x]], \mb{R} : x} \and
  D' \triangleq \ichoice{\mb{L} : T'[D'], \$ : \one}
\end{mathpar}
To establish $D \equiv D'$, our algorithm explores the $\mb{L}$
branch and checks $T[D] \equiv T'[D']$.
A corresponding closure $\clo{\cdot}{T[D] \equiv T'[D']}$ is
added to $\Gamma$, and our algorithm then checks $T[T[D]] \equiv
T'[T'[D']]$.
This process repeats until it exceeds the depth bound and terminates
with an inconclusive answer.
What the algorithm never realizes is that $T[x] \equiv T'[x]$
for all $x \in \mi{Type}$; it fails to generalize to this hypothesis and
is always inserting closed equality constraints to $\G$.

To allow a recourse, we permit the programmer to declare (concrete syntax)
\begin{Verbatim}
  eqtype T[x] = T'[x]
\end{Verbatim}
an equality constraint easily verified by our algorithm.
We then \emph{seed} the $\G$ in the equality algorithm with the corresponding
closure from the $\m{eqtype}$ constraint which can then
be used to establish $D \equiv D'$
\[
  \cdot \semi \clo{x}{T[x] \equiv T'[x]} \vdash D \equiv D'
\]
which, upon exploring the $\mb{L}$ branch reduces to
\[
  \cdot \semi \clo{x}{T[x] \equiv T'[x]}, \clo{\cdot}{D \equiv D'}
  \vdash T[D] \equiv T'[D']
\]
which holds because under the substitution $[D/x]$ as required
by the $\m{def}$ rule.

In the implementation, we first collect all the $\m{eqtype}$
declarations in the program into a global set of closures $\G_0$.
We then validate every $\m{eqtype}$ declaration by checking
$\vars \semi \G_0 \vdash A \equiv B$ for every pair $(A,B)$
(with free variables $\vars$) in the $\m{eqtype}$ declarations.
Essentially, this ensures that all equality declarations are
valid w.r.t. each other.
Finally, all equality checks are then performed under this more
general $\G_0$. The soundness of this approach can be proved
with the following more general theorem.
\begin{theorem}[Seeded Soundness]\label{thm:tpeq_sound}
  For a valid set of $\m{eqtype}$ declarations $\G_0$, if
  $\vars \semi \G_0 \vdash A \equiv B$, then
  $\forall \vars.\, A \equiv B$.
\end{theorem}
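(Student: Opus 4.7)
The plan is to generalize the construction from Theorem~\ref{thm:tpeq_sound} by building a single bisimulation that simultaneously covers the main derivation and all the auxiliary derivations that certified the seed closures. Validity of $\G_0$ means that for every seed closure $c = \clo{\vars_c}{A_c \equiv B_c} \in \G_0$ we have chosen a derivation $\mathcal{D}_c$ of $\vars_c \semi \G_0 \vdash A_c \equiv B_c$, and likewise we are given a derivation $\mathcal{D}_0$ of $\vars \semi \G_0 \vdash A \equiv B$. The starting observation is that each of these derivations may close loops against arbitrary elements of $\G_0$ via the $\m{def}$ rule, so no single one of them in isolation yields a bisimulation; however, their union does.

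Concretely, I would first define the combined closure set
\[
  \mathcal{S}^{\star} \;=\; \mathcal{S}(\mathcal{D}_0) \;\cup\; \bigcup_{c \in \G_0} \mathcal{S}(\mathcal{D}_c),
\]
and then mimic the construction from Theorem~\ref{thm:tpeq_sound}: let
\[
  \rel_0 = \{ (C[\sigma], D[\sigma]) \mid \clo{\vars'}{C \equiv D} \in \mathcal{S}^{\star},\ \sigma \text{ closed over } \vars' \},
\]
extend by $\rel_1$ as before to account for the reflexivity layer, and take $\rel$ to be the reflexive-transitive closure of $\rel_0 \cup \rel_1$. Lemma~\ref{lem:closure} still applies to each $\mathcal{D}_c$ and to $\mathcal{D}_0$ individually, so the closure invariants hold uniformly over $\mathcal{S}^{\star}$.

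The bulk of the argument then reuses, case by case, the bisimulation verification from Theorem~\ref{thm:tpeq_sound}. The only genuinely new situation arises in the $\m{def}$ case, when a node in some $\mathcal{D}_c$ (or in $\mathcal{D}_0$) closes a loop against a seed closure $\clo{\vars'}{V_1 \indv{A_1'} \equiv V_2 \indv{A_2'}} \in \G_0$. Previously this closure was generated by the derivation itself and therefore lay in $\mathcal{S}(\mathcal{D}_0)$; now it belongs to $\G_0$, but since $\G_0$ is valid we have the derivation $\mathcal{D}_{(V_1 \indv{A_1'}, V_2 \indv{A_2'})}$ and hence its associated closures sit inside $\mathcal{S}^{\star}$. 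This is exactly what is needed to reuse the substitution-composition step $\sigma \circ \sigma'$ from the proof of Theorem~\ref{thm:tpeq_sound}: the bisimulation obligation $(V_1\indv{A_1[\sigma]}, V_2\indv{A_2[\sigma]}) \in \rel$ follows from the membership of the instantiated seed closure in $\rel$, combined with the two auxiliary equalities provided by the second premise of $\m{def}$, and the transitive closure in the definition of $\rel$.

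The main obstacle I expect is justifying the mutual dependence between the seed derivations: each $\mathcal{D}_c$ may appeal to any other $c' \in \G_0$ via $\m{def}$, so no individual $\mathcal{D}_c$ can be shown sound in isolation. The resolution is that soundness is being established coinductively, and what we actually need is that $\rel$, built from the \emph{union} of all these derivations, is closed under the bisimulation functor in the sense of Definition~\ref{def:rel}. Because every $\m{def}$ appeal, whether to a closure created inside some $\mathcal{D}_c$ or to a seed closure in $\G_0$, lands in $\mathcal{S}^{\star}$ by construction, this closure property goes through uniformly. Applying the resulting bisimulation to the particular closure $\clo{\vars}{A \equiv B} \in \mathcal{S}(\mathcal{D}_0) \subseteq \mathcal{S}^{\star}$ yields $\forall \vars.\, A \equiv B$, which is the desired conclusion.
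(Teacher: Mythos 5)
Your proposal is correct, and it pursues the same basic strategy the paper gestures at---enlarge the bisimulation built in the unseeded soundness proof so that it also covers the seed closures---but it makes that step rigorous where the paper's two-sentence argument is not. The paper asserts that, since $\G_0$ is valid, all closed instances of $\G_0$ ``are already proven to be bisimilar'' and may simply be added to the relation; taken literally with the paper's own syntactic notion of validity (each $\m{eqtype}$ declaration is checked under the full seed set $\G_0$), this is circular, because the bisimilarity of each seed is certified only by a derivation that may itself close loops against other seeds, or against itself, via $\m{def}$. Your construction resolves exactly this point: by forming $\mathcal{S}^{\star}$ as the union of the closure sets of the main derivation and of every validation derivation $\mathcal{D}_c$, each $\m{def}$ appeal---whether to a closure introduced by $\m{expd}$ inside the same derivation or to a seed in $\G_0$---lands on a closure whose closed instances already lie in $\rel_0$, since each seed is the root sequent of its own $\mathcal{D}_c$; the substitution-composition and transitivity argument from the paper's $\m{def}$ case then goes through verbatim, and Lemma~\ref{lem:closure} applies derivation by derivation. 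What the paper's reading buys is brevity, at the price of implicitly presupposing a semantic validity of the seeds that its own mutual checking procedure does not independently establish; what your simultaneous-union construction buys is a proof that works directly from the validation the implementation actually performs, treating the mutual recursion among the $\m{eqtype}$ declarations coinductively, which is the honest content of the phrase ``valid w.r.t.\ each other.''
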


Our soundness proof can easily be modified to accommodate
this requirement. Intuitively, since $\G_0$ is valid, all
closed instances of $\G_0$ are already proven to be bisimilar.
Thus, all properties of a type bisimulation are still preserved
if all closed instances of $\G_0$ are added to it.

One final note on the rule of reflexivity: a type name may \emph{not}
actually depend on its parameter.  As a simple example, we have
$V[\alpha] = \one$; a more complicated one would be
$V[\alpha] = \ichoice{a : V[V[\alpha]], b : \one}$.  When applying
reflexivity, we would like to conclude that $V[A] \equiv V[B]$
regardless of $A$ and $B$.  This could be easily established with an
equality type declaration $\m{eqtype}\; V[\alpha] = V[\beta]$.  In
order to avoid this syntactic overhead for the programmer, we
determine for each parameter $\alpha$ of each type name $V$ whether
its definition is nonvariant in $\alpha$.  This information is
recorded in the signature and used when applying the reflexivity rule
by ignoring nonvariant arguments.

% \vspace{-1em}
\section{Formal Language Description}\label{sec:language}

\begin{table*}[t]
  \centering
  \small
  \begin{tabular}{l l l l l}
  \textbf{Type} & \textbf{Cont.} & \textbf{Process Term} & \textbf{Cont.} & \multicolumn{1}{c}{\textbf{Description}} \\
%   \textbf{(current)} & & \textbf{(current)} & & \\
  \toprule
  $c : \ichoice{\ell : A_\ell}_{\ell \in L}$ & $c : A_k$ & $\esendl{c}{k} \semi P$
  & $P$ & send label $k$ on $c$ \\
  & & $\ecase{c}{\ell}{Q_\ell}_{\ell \in L}$ & $Q_k$ & receive label $k$ on $c$ \\
  \addlinespace
  $c : \echoice{\ell : A_\ell}_{\ell \in L}$ & $c : A_k$ & $\ecase{c}{\ell}{P_\ell}_{\ell \in L}$
  & $P_k$ & receive label $k$ on $c$ \\
  & & $\esendl{c}{k} \semi Q$ & $Q$ & send label $k$ on $c$ \\
  \addlinespace
  $c : A \tensor B$ & $c : B$ & $\esendch{c}{w} \semi P$
  & $P$ & send channel $w : A$ on $c$ \\
  & & $\erecvch{c}{y} \semi Q_y$ & $Q_y[w/y]$ & receive channel $w : A$ on $c$ \\
  \addlinespace
  $c : A \lolli B$ & $c : B$ & $\erecvch{c}{y} \semi P_y$
  & $P_y[w/y]$ & receive channel $w : A$ on $c$ \\
  & & $\esendch{c}{w} \semi Q$ & $Q$ & send channel $w : A$ on $c$ \\
  \addlinespace
  $c : \texists{\alpha}A $ & $c : A[B/\alpha]$ & $\esendch{c}{[B]} \semi P$
  & $P$ & send type $B$ on $c$ \\
  & & $\erecvch{c}{[\alpha]} \semi Q_{\alpha}$ & $Q_{\alpha}[B/{\alpha}]$ & receive type $B$ on $c$ \\
  \addlinespace
  $c : \tforall{\alpha}A $ & $c : A$ & $\erecvch{c}{[\alpha]} \semi P_{\alpha}$
  & $P_{\alpha}[B/\alpha]$ & receive type $B$ on $c$ \\
  & & $\esendch{c}{[B]} \semi Q$ & $Q$ & send type $B$ on $c$ \\
  \addlinespace
  $c : \one$ & --- & $\eclose{c}$
  & --- & send $\mi{close}$ on $c$ \\
  & & $\ewait{c} \semi Q$ & $Q$ & receive $\mi{close}$ on $c$ \\
  \bottomrule
  \end{tabular}
  \caption{Session types with operational description}
  \label{tab:language}
  \vspace{-2em}
  \end{table*}

In this section, we present the program constructs we have designed
to realize nested polymorphism which have also been integrated with the
Rast language~\cite{Das20FSCD,Das20CONCUR,Das20PPDP} to
support general-purpose programming.
The underlying base system of session types is derived from a Curry-Howard
interpretation~\cite{Caires10concur,Caires16mscs} of intuitionistic linear logic
\cite{Girard87tapsoft}. The key idea is that an intuitionistic linear sequent
$A_1 \; A_2 \; \ldots \; A_n \vdash A$
is interpreted as the interface to a process $P$. We label each of the
antecedents with a channel name $x_i$ and the succedent with channel name $z$.
The $x_i$'s are \emph{channels used by} $P$ and $z$ is the \emph{channel provided by} $P$.
\begin{mathpar}
  (x_1 : A_1) \; (x_2 : A_2) \ldots (x_n : A_n) \vdash P :: (z : C)
\end{mathpar}
The resulting judgment formally states that process $P$ provides a service of
session type $C$ along channel $z$, while using the services of session types $A_1,
\ldots, A_n$ provided along channels $x_1, \ldots, x_n$ respectively. All these
channels must be distinct. We abbreviate the antecedent of the sequent by $\Delta$.

Due to the presence of type variables, the formal typing judgment
is extended with $\vars$ and written as
\begin{center}
  \begin{minipage}{0cm}
  \begin{tabbing}
  $\vars \semi \D \vdash_\Sg P :: (x : A)$
  \end{tabbing}
  \end{minipage}
\end{center}
where $\vars$ stores the type variables $\alpha$,
$\D$ represents the linear antecedents $x_i : A_i$, $P$ is the process
expression and $x : A$ is the linear succedent. We propose and maintain
that all free type variables in $\D, P$, and $A$ are contained in $\vars$.
Finally, $\Sigma$ is a fixed valid signature containing type and process
definitions.
%in~\autoref{subsec:base}). 
Table~\ref{tab:language} overviews the session types, their associated
process terms, their continuation (both in types and terms) and operational description.
For each type, the first line describes the provider's viewpoint, while
the second line describes the client's matching but dual viewpoint.
% Figure~\ref{fig:basic-typing} describes selected
% typing rules (ignore the premises and annotation on the turnstile
% marked in blue, introduced in Section~\ref{subsec:ergo}) leaving the complete
% set of rules to Appendix~\ref{app:formal}.

We formalize the operational semantics as a system of \emph{multiset rewriting
rules}~\cite{Cervesato09SEM}. We introduce semantic objects $\proc{c}{P}$
and $\msg{c}{M}$ which mean that process $P$ or message $M$ provide
along channel $c$.
A process configuration is a multiset of such objects, where any two
provided channels are distinct.

\subsection{Basic Session Types}\label{subsec:base}

We briefly review the structural types already existing in the Rast language,
focusing on explicit quantifier operators that we introduce.

\paragraph{\textbf{Structural Types}}
The \emph{internal choice} type constructor
$\ichoice{\ell : A_{\ell}}_{\ell \in L}$ is an $n$-ary labeled
generalization of the additive disjunction $A \ichoiceop B$.
Operationally, it requires the provider of
$x : \ichoice{\ell : A_{\ell}}_{\ell \in L}$ to send a label
label $k \in L$ on channel $x$ and continue to provide
type $A_{k}$. The corresponding process term is written as $(\esendl{x}{k} \semi P)$
where the continuation $P$ provides type $x : A_k$.
Dually, the client must branch based
on the label received on $x$ using the process term
$\ecase{x}{\ell}{Q_\ell}_{\ell \in L}$ where $Q_\ell$ is the continuation
in the $\ell$-th branch.
\begin{mathpar}
  \infer[{\oplus}R]
    {\vars \semi \D \vdash (\esendl{x}{k} \semi P) :: (x : \ichoice{\ell : A_\ell}_{\ell \in L})}
    {(k \in L) & \vars \semi \D \vdash P :: (x : A_k)}
  \and
  \infer[{\oplus}L]
    {\vars \semi \D, (x : \ichoice{\ell : A_\ell}_{\ell \in L}) \vdash
    \ecase{x}{\ell}{Q_\ell}_{\ell \in L} :: (z : C)}
    {(\forall \ell \in L) &
      \vars \semi \D, (x : A_\ell) \vdash Q_\ell :: (z : C)}
\end{mathpar}

Communication is asynchronous, so that the client
($\esendl{c}{k} \semi Q$) sends a message $k$ along $c$ and continues as $Q$
without waiting for it to be received. As a technical device to ensure that
consecutive messages on a channel arrive in order, the sender also creates a
fresh continuation channel $c'$ so that the message $k$ is actually represented
as $(\esendl{c}{k} \semi \fwd{c}{c'})$ (read: send $k$ along $c$ and continue along
$c'$). When the message $k$ is received along $c$, we select branch $k$ and
also substitute the continuation channel $c'$ for $c$.
\begin{tabbing}
$({\oplus}S) : \m{proc}(c, c.k \semi P) \;\mapsto\;
\m{proc}(c', P[c'/c]), \m{msg}(c, c.k \semi \fwd{c}{c'})$ \\
$({\oplus}C):$ \= $\m{msg}(c, c.k \semi \fwd{c}{c'}),
\m{proc}(d, \m{case}\;c\;(\ell \Rightarrow Q_\ell)_{\ell \in L})
\;\mapsto\; \m{proc}(d, Q_k[c'/c])$
\end{tabbing}

The \emph{external choice} constructor $\echoice{\ell : A_{\ell}}_{\ell \in L}$
generalizes additive conjunction and is the \emph{dual} of internal
choice reversing the role of the provider and client. Thus, the provider
branches on the label $k \in L$ sent by the client.
\begin{mathpar}
  \infer[{\with}R]
  {\vars \semi \D \vdash \ecase{x}{\ell}{P_\ell}_{\ell \in L} ::
  (x : \echoice{\ell : A_\ell}_{\ell \in L})}
  {(\forall \ell \in L)
   & \vars \semi \D \vdash P_\ell :: (x : A_\ell)}
  \and
  \infer[{\with}L]
  {\vars \semi \D, (x : \echoice{\ell : A_\ell}_{\ell \in L}) \vdash
  (\esendl{x}{k} \semi Q) :: (z : C)}
  {(k \in L) & \vars \semi \D, (x : A_k) \vdash Q :: (z : C)}
\end{mathpar}
Rules ${\with}S$ and ${\with}C$ below describe the operational behavior of the
provider and client respectively $\fresh{c'}$.
% \vspace{-0.2em}
\begin{tabbing}
$({\with}S) : \m{proc}(d, c.k \semi Q) \;\mapsto\; \m{msg}(c', c.k \semi \fwd{c'}{c}),
\m{proc}(d, Q[c'/c])$ \\
$({\with}C):$ \= $\m{proc}(c, \m{case}\;c\;(\ell \Rightarrow P_\ell)_{\ell \in L}),
\m{msg}(c', c.k \semi \fwd{c'}{c})
\;\mapsto\; \m{proc}(c', P_k[c'/c])$
\end{tabbing}

The \emph{tensor} operator $A \tensor B$ prescribes that the provider of
$x : A \tensor B$
sends a channel, say $w$ of type $A$ and continues to provide type $B$. The
corresponding process term is $\esendch{x}{w} \semi P$ where $P$ is
the continuation.  Correspondingly, its client must receive a channel on $x$
using the term $\erecvch{x}{y} \semi Q$, binding it to variable $y$
and continuing to execute $Q$.
\begin{mathpar}
  \infer[{\tensor}R]
    {\vars \semi \D, (y : A) \vdash (\esendch{x}{y} \semi P) :: (x : A \tensor B)}
    {\vars \semi \D \vdash P :: (x : B)}
  \and
  \infer[{\tensor}L]
    {\vars \semi \D, (x : A \tensor B) \vdash (\erecvch{x}{y} \semi Q) :: (z : C)}
    {\vars \semi \D, (y : A), (x : B) \vdash Q :: (z : C)}
\end{mathpar}
Operationally, the provider $(\esendch{c}{d} \semi P)$ sends the
channel $d$ and the continuation channel $c'$ along $c$ as a message and
continues with executing $P$. The client receives the channel $d$ and continuation
channel $c'$ and substitutes $d$ for $x$ and $c'$ for $c$.
\begin{tabbing}
$({\tensor}S) : \m{proc}(c, \m{send}\; c\; d \semi P) \;\mapsto\; $\=
$\m{proc}(c', P[c'/c]),
\m{msg}(c, \m{send}\; c\; d \semi \fwd{c}{c'})$ \\
$({\tensor}C) : $ \= $\m{msg}(c, \m{send}\; c\; d \semi \fwd{c}{c'}),
\m{proc}(e, x \leftarrow \m{recv}\; c \semi Q)
\;\mapsto\; \m{proc}(e, Q[c', d/c, x])$
\end{tabbing}

The dual operator $A \lolli B$ allows the provider to receive a
channel of type $A$ and continue to provide type $B$. The client
of $A \lolli B$, on the other hand, sends the channel of type $A$
and continues to use $B$ using dual process terms as $\tensor$.
\begin{mathpar}
  \infer[{\lolli}R]
    {\vars \semi \D \vdash (\erecvch{x}{y} \semi P) :: (x : A \lolli B)}
    {\vars \semi \D, (y : A) \vdash P :: (x : B)}
  \and
  \infer[{\lolli}L]
    {\vars \semi \D, (x : A \lolli B), (y : A) \vdash (\esendch{x}{y} \semi Q) :: (z : C)}
    {\vars \semi \D, (x : B) \vdash Q :: (z : C)}
\end{mathpar}
\begin{tabbing}
$({\lolli}S) : \m{proc}(e, \m{send}\; c\; d \semi Q) \;\mapsto\; $\=
$\m{msg}(c', \m{send}\; c\; d \semi \fwd{c'}{c}),
\m{proc}(e, Q[c'/c])$ \\
$({\lolli}C) : $ \= $\m{proc}(c, x \leftarrow \m{recv}\; c \semi P),
\m{msg}(c', \m{send}\; c\; d \semi \fwd{c'}{c})
\;\mapsto\; \m{proc}(c', P[c', d/c, x])$
\end{tabbing}

The type $\one$
indicates \emph{termination} requiring that the provider of $x : \one$
send a \emph{close} message, formally written as $\eclose{x}$
followed by terminating the communication. Correspondingly,
the client of $x : \one$ uses the term $\ewait{x} \semi Q$
to wait for $x$ to terminate before continuing with executing $Q$.
Linearity enforces that the provider does not use any channels.
\begin{mathpar}
  \infer[{\one}R]
    {\vars \semi \cdot \vdash (\eclose{x}) :: (x : \one)}
    {}
  \and
  \infer[{\one}L]
    {\vars \semi \D, (x : \one) \vdash (\ewait{x} \semi Q) :: (z : C)}
    {\vars \semi \D \vdash Q :: (z : C)}
\end{mathpar}
Operationally, the provider waits for the closing message, which
has no continuation channel since the provider terminates.
\begin{tabbing}
$({\one}S) : \m{proc}(c, \m{close}\; c) \;\mapsto\; \m{msg}(c, \m{close}\; c)$ \\
$({\one}C) : $ \= $\m{msg}(c, \m{close}\; c),
\m{proc}(d, \m{wait}\; c \semi Q) \;\mapsto\; \m{proc}(d, Q)$
\end{tabbing}

% \paragraph{\textbf{Forwarding}}
A forwarding process $\fwd{x}{y}$ identifies the channels $x$ and $y$ so that any
further communication along either $x$ or $y$ will be along the unified channel.
Its typing rule corresponds to the logical rule of identity.
\begin{mathpar}
  \infer[\m{id}]
    {\vars \semi y : A \vdash (\fwd{x}{y}) :: (x : A)}
    {}
\end{mathpar}
Operationally, a process $\fwd{c}{d}$ \emph{forwards} any message M
that arrives on $d$ to $c$ and vice-versa. Since channels are used
linearly, the forwarding process can then terminate, ensuring proper
renaming, as exemplified in the rules below.
\begin{tabbing}
$(\m{id}^+C) : $ \= $\m{msg}(d', M),
\m{proc}(c, \fwd{c}{d}) \;\mapsto\; \m{msg}(c, M[c/d])$ \\
$(\m{id}^-C) : $ \= $\m{proc}(c, \fwd{c}{d}),
\m{msg}(e, M(c)) \;\mapsto\; \m{msg}(e, M(c)[d/c])$
\end{tabbing}
We write $M(c)$ to indicate that $c$ must occur in message $M$ ensuring that $M$ is the
sole client of $c$.

\paragraph{\textbf{Process Definitions}}
Process definitions have the form
$\D \vdash f\indv{\alpha} = P :: (x : A)$ where $f$ is the name of the
process and $P$ its definition, with $\D$ being the channels used
by $f$ and $x : A$ being the offered channel. In addition, $\overline{\alpha}$
is a sequence of type variables that $\D$, $P$ and $A$ can refer to.
All definitions are collected in the fixed global signature $\Sg$.
For a \emph{valid signature}, we
require that $\overline{\alpha} \semi \D \vdash P :: (x : A)$
for every definition, thereby allowing
definitions to be mutually recursive. A new instance of a defined
process $f$ can be spawned with the expression
$\ecut{x}{f \indv{A}}{\overline{y}}{Q}$ where $\overline{y}$ is a
sequence of channels matching the antecedents $\D$
and $\overline{A}$ is a sequence of types matching the type
variables $\overline{\alpha}$.
The newly spawned process will use all variables in
$\overline{y}$ and provide $x$ to the continuation $Q$.
\vspace{-0.5em}
\begin{mathpar}
  % \vspace{-0.5em}
  \inferrule*[right=$\m{def}$]
  {\overline{y':B'} \vdash f \indv{\alpha} = P_f :: (x' : B) \in \Sg \quad
  \D' = \overline{(y:B')}[\overline{A}/\overline{\alpha}] \\\\
  \vars \semi \D, (x : B[\overline{A}/\overline{\alpha}]) \vdash Q :: (z : C)}
  {\vars \semi \D, \D' \vdash (\ecut{x}{f \indv{A}}{\overline{y}}{Q}) :: (z : C)}
\end{mathpar}
The declaration of $f$ is looked up in the signature $\Sg$ (first premise),
and $\overline{A}$ is substituted for $\overline{\alpha}$ while
matching the types in $\D'$ and $\overline{y}$ (second premise). Similarly,
the freshly created channel $x$ has type $A$ from the signature
with $\overline{A}$ substituted for $\overline{\alpha}$.
The corresponding semantics rule also performs a similar substitution
$\fresh{a}$.
\begin{tabbing}
$(\m{def}C) : $ \= $\m{proc}(c, \ecut{x}{f \indv{A}}{\overline{d}}{Q}) \; \mapsto \;
\m{proc}(a, P_f[a/x, \overline{d}/\overline{y'}, \overline{A}/\overline{\alpha}]), \;
   \m{proc}(c, Q[a/x])$
\end{tabbing}
where $\overline{y' : B'} \vdash f = P_f :: (x' : B) \in \Sg$.

Sometimes a process invocation is a tail call,
written without a continuation as $\procdef{f \indv{A}}{\overline{y}}{x}$. This is a
short-hand for $\procdef{f \indv{A}}{\overline{y}}{x'} \semi \fwd{x}{x'}$ for a fresh
variable $x'$, that is, we create a fresh channel
and immediately identify it with x.

% \paragraph{\textbf{Type Definitions}}
% Although type definitions are often used in session types, 
% the usage of type definitions with type parameters, makes 
% this a distinguishing feature of \polymorphicsessiontypes.

% Session types can be defined
% recursively, departing from a strict Curry-Howard interpretation of
% linear logic, analogous to the way pure ML or Haskell depart from
% a pure interpretation of intuitionistic logic. 
% %%
% Taking advantage of type constructors, \polymorphicsessiontypes\ 
% further enable \emph{polymorphic recursion} on the definition of types.
% Recall that a type definition for a type name $V$ is of the form 
% $V \indv{\alpha} = A$, where $\overline{\alpha}$ denotes a sequence of
% type variables that $A$ can refer to.
% The signature $\Sg$ contains all type definitions,
% that are possibly mutually recursive.
% For a well-formed signature, we require $A$ to be
% contractive.
% %%
% Our type definitions are \emph{equirecursive} so we can
% silently replace indexed type names $V \indv{B}$ by $A
% [\overline{B}/\overline{\alpha}]$ during type checking,
% and no explicit rules for recursive types are needed.

% All types in a signature $\Sg$ must be \emph{valid}, formally written
% as $\vars \vdash \valid{A}$ requiring that all free type variables in
% $A$ are contained in $\vars$.

\subsection{Type Safety}\label{subsec:soundness}
The extension of session types with nested polymorphism
is proved type safe by the standard theorems of
\emph{preservation} and \emph{progress}, also known as
\emph{session fidelity} and \emph{deadlock freedom}.
At runtime, a program is represented using a multiset of semantic
objects denoting processes and messages defined as a \emph{configuration}.
\begin{center}
  \begin{minipage}{0cm}
  \begin{tabbing}
  $\config \; ::= \; \cdot \mid \config, \config' \mid \proc{c}{P} \mid \msg{c}{M}$
  \end{tabbing}
  \end{minipage}
\end{center}
We say that $\proc{c}{P}$ (or $\msg{c}{M}$) provide channel $c$.
We stipulate that no two distinct semantic objects in a configuration
provide the same channel.

\paragraph{\textbf{Type Preservation}}
The key to preservation is defining the rules to
\emph{type a configuration}.
We define a well-typed configuration
using the judgment $\D_1 \Vdash_{\Sg} \config :: \D_2$
denoting that configuration $\config$ uses channels $\D_1$
and provides channels $\D_2$. A configuration is always
typed w.r.t. a valid signature $\Sg$.
Since the signature $\Sg$ is fixed, we elide it from
the presentation.

\begin{figure}[t]
  \begin{mathpar}
  \infer[\m{emp}]
  {\D \Vdash (\cdot) :: \D}
  {}
  \and
  \infer[\m{comp}]
  {\D_1 \Vdash (\config_1, \config_2) :: \D_3}
  {\D_1 \Vdash \config_1 :: \D_2 \and
  \D_2 \Vdash \config_2 :: \D_3}
  \and
  \infer[\m{proc}]
  {\D \Vdash \proc{x}{P} :: (x:A)}
  {\cdot \semi \D \vdash P :: (x : A)}
  \and
  \infer[\m{msg}]
  {\D \Vdash \msg{x}{M} :: (x:A)}
  {\cdot\semi \D \vdash M :: (x : A)}
  \end{mathpar}
  \vspace{-1em}
  \caption{Typing rules for a configuration}
  \vspace{-1em}
  \label{fig:config_typing}
\end{figure}

The rules for typing a configuration are defined in
Figure~\ref{fig:config_typing}.
The $\m{emp}$ rule states that an empty configuration
does not consume any channels provides all channels it uses.
The $\m{comp}$
rule composes two configurations $\config_1$ and $\config_2$;
$\config_1$ provides channels $\D_2$ while $\config_2$ uses
channels $\D_2$. The rule $\m{proc}$ creates a singleton configuration
out of a process. Since configurations are runtime objects,
they do not refer to any free variables and $\vars$ is empty.
The $\m{msg}$ rule is analogous.

\paragraph{\textbf{Global Progress}}
To state progress, we need to define a \emph{poised
process}~\cite{Pfenning15fossacs}.  A process $\proc{c}{P}$ is
poised if it is trying to receive a message on $c$. Dually, a message
$\msg{c}{M}$ is poised if it is sending along $c$. A configuration
is poised if every message or process in the configuration is
poised. Intuitively, this represents that the
configuration is trying to communicate \emph{externally} along
one of the channels it uses or provides.
\begin{theorem}[Type Safety]\label{thm:type_safety}
  For a well-typed configuration $\D_1 \Vdash_{\Sg} \config :: \D_2$,
  \begin{enumerate}
    \item[(i)] (Preservation) If $\config \step \config'$, then
    $\D_1 \Vdash_{\Sg} \config' :: \D_2$

    \item[(ii)] (Progress) Either $\config$ is poised, or $\config \step
    \config'$. 
  \end{enumerate}
\end{theorem}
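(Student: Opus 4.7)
The plan is to prove the two parts by the standard twin induction on configuration typing, with the novelty concentrated in handling type substitutions that arise from nested polymorphism and from the equirecursive reading of type definitions.

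For \textbf{Preservation}, I would proceed by case analysis on the rewriting rule $\config \step \config'$ that fires. In each case, I use inversion on the $\m{comp}/\m{proc}/\m{msg}$ rules of Figure~\ref{fig:config_typing} to extract typing derivations for the participating processes and messages, then reassemble a typing derivation for $\config'$ using the right and left rules of the corresponding type constructor. The communication rules pair a right rule (for the sender/provider) with a left rule (for the receiver/client) through a continuation channel $c'$ introduced by the semantics; the continuation channel carries the continuation type listed in Table~\ref{tab:language}, and a small renaming lemma lets us substitute $c'$ for $c$ in the residual processes. The forwarding rules $\m{id}^{\pm}C$ require the standard identity-coalescing lemma: if $\D, (d:A) \Vdash \proc{c}{\fwd{c}{d}} :: (c:A)$ is composed with a message providing $d$ or a client consuming $c$, the renamed message is still well typed at the correct endpoint. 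Before attacking the rules, I will need two structural lemmas: (a) \emph{type substitution}, stating that if $\vars, \alpha \semi \D \vdash P :: (x:A)$ and $B$ is a type valid in $\vars$, then $\vars \semi \D[B/\alpha] \vdash P[B/\alpha] :: (x:A[B/\alpha])$, and (b) \emph{type equality respects typing}, used implicitly whenever an equirecursive unfolding is needed to line up a structural constructor with the advertised type $V\indv{B}$; here I appeal to Definition~\ref{def:unfold} and to the soundness established by Theorem~\ref{thm:tpeq_sound}.

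The only genuinely new case is $\m{def}C$ for process spawn. Here I must combine substitution for ordinary channel names $\overline{y'} \mapsto \overline{d}$, $x' \mapsto a$, with the type substitution $\overline{\alpha} \mapsto \overline{A}$ dictated by the instantiation $f\indv{A}$. Using the signature validity assumption $\overline{\alpha} \semi \overline{y':B'} \vdash P_f :: (x':B)$ together with the substitution lemma (a), I obtain $\cdot \semi \overline{(y:B')}[\overline{A}/\overline{\alpha}] \vdash P_f[\ldots] :: (a : B[\overline{A}/\overline{\alpha}])$, which matches the antecedents prescribed by the $\m{def}$ typing rule, and then the $\m{comp}$ rule splices the freshly spawned process with the typed continuation $Q$. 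This is the step I expect to be the main obstacle, because it is the only point where nesting actually bites: the substituted types $\overline{A}$ may themselves contain further type-name applications, and the resulting typing must be reconciled with the signature through type equality. Contractiveness of the signature, together with the coinductive interpretation of types and Definition~\ref{def:tpeq_vars}, ensures the substitution commutes with the unfold operation used elsewhere in the derivation.

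For \textbf{Progress}, I would argue by induction on the structure of the configuration typing derivation $\D_1 \Vdash \config :: \D_2$. The $\m{emp}$ case is vacuous (the empty configuration is trivially poised). For the $\m{comp}$ case, apply the induction hypothesis to $\config_1$: either it can step, in which case the whole configuration steps by the congruence rule, or $\config_1$ is poised. If $\config_1$ is poised, then every object in it is blocked on a channel in $\D_1$ or $\D_2$; in the latter case, we can peek at the matching object in $\config_2$ providing or consuming that channel. A case analysis on the head process term of that matching object, using the type advertised in the interface $\D_2$, shows that sender/receiver pairs are always able to interact via the appropriate $S$/$C$ rule, so $\config_1, \config_2$ steps. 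The $\m{proc}$ and $\m{msg}$ base cases proceed by inversion on the process/message typing and Table~\ref{tab:language}: a process typed at $x:A$ is either poised (when its head action is a receive on $x$ or on some antecedent in $\D_1$) or can fire its $S$ rule locally. The key invariant is linearity: each channel has exactly one provider and one client, so when a non-poised object exists its partner can be located in the configuration, and the corresponding semantic rule applies.
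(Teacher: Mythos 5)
Your proposal follows essentially the same route as the paper's (brief) proof: preservation by case analysis on the multiset rewriting rules, inverting the configuration typing and reassembling a derivation for $\config'$ (with a type substitution lemma doing the work in the $\m{def}C$ case), and progress by induction over the configuration typing with inversion on the head process or message, using poisedness of the rest of the configuration to locate a communication partner. The paper states this only as a sketch, so your added detail (the substitution and renaming lemmas, the treatment of $\m{def}C$) is a faithful elaboration rather than a different approach.
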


\begin{proof}
  Preservation is proved by case analysis on the rules of
  operational semantics. First, we invert the derivation of
  the current configuration $\config$ and use the premises to
  assemble a new derivation for $\config'$.
  Progress is proved by induction on the right-to-left typing of $\config$
  so that either $\config$ is empty (and therefore poised) or $\config =
  (\dc, \proc{c}{P})$ or $\config = (\dc, \msg{c}{M})$. By the induction
  hypothesis, either $\dc \step \dc'$ or $\dc$ is poised.
  In the former case, $\config$ takes a step (since $\dc$ does).
  In the latter case, we analyze the cases for $P$ and $M$,
  applying multiple steps of inversion to show that in each case either
  $\config$ can take a step or is poised.
\end{proof}

% \vspace{-1em}
\section{Relationship to Context-Free Session Types}
\label{sec:context-free}

As ordinarily formulated, session types express communication protocols that can be described by regular languages~\cite{Thiemann16icfp}.
In particular, the type structure is necessarily tail recursive.
Context-free session types~(CFSTs) were introduced by Thiemann and Vascoconcelos~\cite{Thiemann16icfp} as a way to express a class of communication protocols that are not limited to tail recursion.
CFSTs express protocols that can be described by single-state, real-time DPDAs that use the empty stack acceptance criterion~\cite{Almeida20tacas,korenjak1966simple}.

Despite their name, the essence of CFSTs is not their connection to a particular subset of the (deterministic) context-free languages.
Rather, the essence of CFSTs is that session types are enriched to admit a notion of sequential composition.
\Polymorphicsessiontypes\ are strictly more expressive than CFSTs, in the sense that there exists a proper fragment of \polymorphicsessiontypes\ that is closed under a notion of sequential composition.
(In keeping with process algebras like ACP~\cite{Bergstra89}, we define a sequential composition to be an operation that satisfies the laws of a right-distributive monoid.)

Consider (up to $\alpha$,$\beta$,$\eta$-equivalence) the linear, tail functions from types to types with unary type constructors only:
\begin{equation*}
  S,T \Coloneqq
    \begin{array}[t]{@{}l@{}}
      \llam{\alpha}{\alpha} \mid \llam{\alpha}{V[\lapp{S}{\alpha}]}
      \mid \llam{\alpha}{\ichoice{\ell : \lapp{S_{\ell}}{\alpha}}_{\ell \in L}}
      \mid \llam{\alpha}{\echoice{\ell : \lapp{S_{\ell}}{\alpha}}_{\ell \in L}} \\
      \mathllap{\mid {}} \llam{\alpha}{A \tensor (\lapp{S}{\alpha})}
      \mid \llam{\alpha}{A \lolli (\lapp{S}{\alpha})}
    \end{array}
\end{equation*}
The linear, tail nature of these functions allows the type $\alpha$ to be thought of as a continuation type for the session.
The functions $S$ are closed under function composition, and the identity function, $\llam{\alpha}{\alpha}$, is included in this class of functions.
Moreover, because these functions are tail functions, composition right-distributes over the various logical connectives in the following sense:
\begin{align*}
  (\llam{\alpha}{V[\lapp{S}{\alpha}]}) \circ T
    &= \llam{\alpha}{V[\lapp{(S \circ T)}{\alpha}]}
  \\
  (\llam{\alpha}{\ichoice{\ell : \lapp{S_{\ell}}{\alpha}}_{\ell \in L}}) \circ T
    &= \llam{\alpha}{\ichoice{\ell : \lapp{(S_{\ell} \circ T)}{\alpha}}_{\ell \in L}}
  \\
  (\llam{\alpha}{A \tensor (\lapp{S}{\alpha})}) \circ T
    &= \llam{\alpha}{A \tensor (\lapp{(S \circ T)}{\alpha})}
\end{align*}
and similarly for $\echoiceop$ and $\lolli$.
Together with the monoid laws of function composition, these distributive properties justify defining sequential composition as $S;T = S \circ T$.

This suggests that although many details distinguish our work from CFSTs, nested session types cover the essence of sequential composition underlying context-free session types.
%
%it ought to be possible to prove a theorem that every CFST process can be translated into a well-typed process in our system of \polymorphicsessiontypes.
However, even 
stating a theorem that every CFST process can be translated into a well-typed process in our system of \polymorphicsessiontypes\
is difficult because the two type systems differ in many details:
we include $\tensor$ and $\lolli$ as session types, but CFSTs do not;
CFSTs use a complex kinding system to incorporate unrestricted session types and combine session types with ordinary function types;
the CFST system uses classical typing for session types and a procedure of type normalization, whereas our types are intuitionistic and do not rely on normalization;
and
the CFST typing rules are based on natural deduction, rather than the sequent calculus.
With all of these differences, a formal translation, theorem, and proof would not be very illuminating beyond the essence already described here.
Empirically, we can also give analogues of the published examples for CFSTs (see, e.g., the first two examples of Section~\ref{sec:examples}).

Finally, \polymorphicsessiontypes\ are strictly \emph{more} expressive than CFSTs.
Recall from Section~\ref{sec:overview} the language $L_3 = \{\mb{L} ^n \mb{a} \,\mb{R}^n \mb{a} \cup \mb{L}^n \mb{b} \,\mb{R}^n \mb{b} \mid n > 0\}$, which can be expressed using \polymorphicsessiontypes\ with \emph{two} type parameters used in an essential way.
Moreover, Korenjak and Hopcroft~\cite{korenjak1966simple} observe that this language cannot be recognized by a single-state, real-time DPDA that uses empty stack acceptance, and thus, CFSTs cannot express the language $L_3$.
More broadly, nested types allow for finitely many states
and acceptance by empty stack or final state, while CFSTs only allow a single state and empty stack acceptance.

% \vspace{-1em}
\section{Implementation}\label{sec:impl}

We have implemented a prototype for nested session
types and integrated it with the open-source Rast system~\cite{Das20FSCD}.
Rast (Resource-aware session types) is a programming language
which implements the intuitionistic version of session types~\cite{Caires10concur}
with support for arithmetic refinements~\cite{Das20CONCUR},
ergometric~\cite{Das18RAST} and temporal~\cite{Das18Temporal} types
for complexity analysis. Our prototype extension is implemented in Standard
ML (8011 lines of code) containing a lexer and parser (1214 lines),
a type checker (3001 lines) and an interpreter (201 lines) and
is well-documented.
The prototype is available in the Rast repository~\cite{RastBitBucket}.

\paragraph{\textbf{Syntax}}
A program contains a series of mutually recursive type and process
declarations and definitions, concretely written as
\begin{lstlisting}
  type V[x1]...[xk] = A
  decl f[x1]...[xk] : (c1 : A1) ... (cn : An) |- (c : A)
  proc c <- f[x] c1 ... cn = P
\end{lstlisting}
Type $V\indv{x}$ is represented in concrete syntax as \verb`V[x1]...[xk]`.
The first line is a \emph{type definition}, where $V$ is the type name
parameterized by type variables $x_1, \ldots, x_k$ and $A$ is its definition. The second
line is a \emph{process declaration}, where $f$ is the process name
(parameterized by type variables $x_1, \ldots, x_k$),
$(c_1 : A_1) \ldots (c_n : A_n)$ are the used channels and corresponding
types, while the offered channel is $c$ of type $A$. Finally, the last line is a
\emph{process definition} for the same process $f$ defined using the process
expression $P$. We use a hand-written lexer and shift-reduce parser to read an
input file and generate the corresponding abstract syntax tree of the program.
The reason to use a hand-written parser instead of a parser generator is to
anticipate the most common syntax errors that programmers make and respond with the
best possible error messages.

Once the program is parsed and its abstract syntax tree is extracted,
we perform a \emph{validity check} on it. This includes checking that
type definitions, and process declarations and definitions are closed
w.r.t. the type variables in scope.
To simplify and improve the efficiency of the type equality
algorithm, we also assign internal names to type subexpressions
parameterized over their free index variables. These internal names
are not visible to the programmer.

\paragraph{\textbf{Type Checking and Error Messages}}
The implementation is carefully designed to produce precise error messages.
To that end, we store the extent (source location) information with
the abstract syntax tree, and use it to highlight the source of the
error. We also follow a bi-directional type checking~\cite{Pierce00TOPLAS}
algorithm reconstructing intermediate types starting with the initial
types provided in the declaration. This helps us precisely identify
the source of the error. Another particularly helpful technique has been
\emph{type compression}. Whenever the type checker expands a type $V
\indv{A}$ defined as $V \indv{\alpha} = B$ to
$B[\overline{A}/\overline{\alpha}]$ we record a reverse mapping from
$B[\overline{A}/\overline{\alpha}]$ to $V \indv{\alpha}$.  When printing types
for error messages this mapping is consulted, and complex types may be
compressed to much simpler forms, greatly aiding readability of
error messages.

% \vspace{-1em}
\section{More Examples}
\label{sec:examples}

% Here we present some examples.
% We demonstrate the expressivity of nested polymorphic
% session types using the examples from prior work on
% context-free session types~\cite{Thiemann16icfp}.

\paragraph{\textbf{Expression Server}}
We adapt the example of an arithmetic expression 
from prior work on context-free session types~\cite{Thiemann16icfp}.
The type of the server is defined as
\begin{verbatim}
  type bin = +{ b0 : bin, b1 : bin, $ : 1 }
  type tm[K] = +{ const : bin * K,
                  add : tm[tm[K]],
                  double : tm[K] }
\end{verbatim}
The type \verb`bin` represents a constant binary natural number. A
process \emph{providing} a binary number sends a stream of bits, \verb`b0` and \verb`b1`,
starting with the least significant bit and eventually terminated by \verb`$`.
% In the cases of \verb`b0` and \verb`b1`, the type recurses to \verb`bin`
% indicating that the process will continue sending the stream, while
% \verb`$` indicates termination through type $\one$.

An arithmetic term, parameterized by continuation type \verb`K` can
have one of three forms: a constant, the sum of two terms,
or the double of a term. Consequently, the type \verb`tm[K]` ensures that
 a process providing \verb`tm[K]` is a \emph{well-formed term}: it
either sends the \verb`const` label followed by sending a constant binary
number of type \verb`bin` and continues with type \verb`K`; or it sends
the \verb`add` label and continues with \verb`tm[tm[K]]`, where
the two terms denote the two summands; or it sends the \verb`double`
label and continues with \verb`tm[K]`.
 In particular, the continuation type
\verb`tm[tm[K]]` in the \verb`add` branch enforces that the process must send exactly
two summands for sums.

As a first illustration, consider two binary constants $a$ and $b$,
and suppose that we want to create the expression $a+2b$. We can issue
commands to the expression server in a \emph{prefix notation}
to obtain $a+2b$, as shown in the following \verb`exp[K]` process, which is parameterized by a continuation type \verb`K`.
\begin{verbatim}
  decl exp[K] : (a : bin) (b : bin) (k : K) |- (e : tm[K])
  proc e <- exp[K] a b k =
    e.add ; e.const ; send e a ; % (b:bin) (k:K) |- (e : tm[K])
    e.double ; e.const ; send e b ; % (k:K) |- (e : K)
    e <-> k
\end{verbatim}
% \begin{verbatim}
%   decl exp[K] : (a : bin) (b : bin) (k : K) |- (e : tm[K])
%   proc e <- exp[K] a b k =
%     e.add ;              % (a:bin) (b:bin) (k:K) |- (e : tm[tm[K]])
%     e.const ; send e a ; % (b:bin) (k:K) |- (e : tm[K])
%     e.double ;           % (b:bin) (k:K) |- (e : tm[K])
%     e.const ; send e b ; % (k:K) |- (e : K)
%     e <-> k
% \end{verbatim}
In prefix notation, $a+2b$ would be written $+ \; (a) \; (2 \; b)$,
which is exactly the form followed by the \verb`exp` process:
The process sends \verb`add`, followed by \verb`const` and the number
\verb`a`, followed by \verb`double`, \verb`const`, and \verb`b`.
Finally, the process continues at type \verb`K` by forwarding \verb`k` to \verb`e`
(intermediate typing contexts on the right).

% The type \verb`tm[K]` ensures that any process offering it must
% be a \emph{well-formed term}. In particular, the continuation type
% \verb`tm[tm[K]]` in the \verb`add` branch enforces that once
% the \verb`add` command is issued, the process must send exactly
% two summands. Similar properties hold of the \verb`const`
% and \verb`double` branches.

To evaluate a term, we can define an \verb`eval` process, parameterized by type \verb`K`:

\begin{verbatim}
decl eval[K] : (t : tm[K]) |- (v : bin * K)
\end{verbatim}
The \verb`eval` process uses channel
\verb`t : tm[K]` as argument,
and offers \verb`v : bin * K`. The process evaluates
term \verb`t` and sends its binary value along \verb`v`.
\begin{verbatim}
  decl eval[K] : (t : tm[K]) |- (v : bin * K)
  proc v <- eval[K] t =
    case t (
      const =>                % (t : bin * K) |- (v : bin * K)
        n <- recv t ;         % (n : bin) (t : K) |- (v : bin * K)
        send v n ; v <-> t
    | add =>                  % (t : tm[tm[K]]) |- (v : bin * K)
        v1 <- eval[tm[K]] t ; % (v1 : bin * tm[K]) |- (v : bin * K)
        n1 <- recv v1 ;       % (n1 : bin) (v1 : tm[K]) |- (v : bin * K)
        v2 <- eval[K] v1 ;    % (n1 : bin) (v2 : bin * K) |- (v : bin * K)
        n2 <- recv v2 ;       % (n1 : bin) (n2 : bin) (v2 : K) |- (v : bin * K)
        n <- plus n1 n2 ;     % (n : bin) (v2 : K) |- (v : bin * K)
        send v n ; v <-> v2
    | double =>               % (t : tm[K]) |- (v : bin * K)
        v1 <- eval[K] t ;     % (v1 : bin * K) |- (v : bin * K)
        n1 <- recv v1 ;       % (n1 : bin) (v1 : K) |- (v : bin * K)
        n <- double n1 ;      % (n : bin) (v1 : K) |- (v : bin * K)
        send v n ; v <-> v1
    )
  \end{verbatim}
  Intuitively, the process evaluates
  term \verb`t` and sends its binary value along \verb`v`. If \verb`t` is a constant, then
  \verb`eval[K]` receives the constant $n$, sends it along $v$ and forwards.
  
  The interesting case is the \verb`add` branch. We evaluate the first
  summand by spawning a new \verb`eval[K]` process on \verb`t`. Note that since
  the type of \verb`t` (indicated on the right) is \verb`tm[tm[K]]`
  and hence, the recursive call to \verb`eval` is at parameter  \verb`tm[K]`.
  This is in contrast with \emph{nominal polymorphism} in functional
  programming languages, where the recursive call must also be at parameter
  \verb`K`. We store the value of the first summand at channel \verb`n1 : bin`.
  Then, we continue to evaluate the second summand by calling \verb`eval[K]`
  on \verb`t` again and storing its value in \verb`n2 : bin`. Finally, we add
  \verb`n1` and \verb`n2` by calling the \verb`plus` process, and send the result
  \verb`bin` along \verb`v`. We follow a similar approach for the \verb`double`
  branch.

\paragraph{\textbf{Serializing binary trees}}

Another example from \cite{Thiemann16icfp} is serializing binary trees.
Here we adapt that example to our system.
Binary trees can be described by:
\begin{verbatim}
type Tree[a] = +{ node : Tree[a] * a * Tree[a] , leaf : 1 }
\end{verbatim}
These trees are polymorphic in the type of data stored at each internal node.
A tree is either an internal node or a leaf, with the internal nodes storing channels that emit the left subtree, data, and right subtree.
To help in creating trees, we can define the following processes.
\begin{verbatim}
decl leaf[a] : . |- (t : Tree[a])
proc t <- leaf[a] =
  t.leaf ; close t

decl node[a] : (l : Tree[a]) (x : a) (r : Tree[a]) |- (t : Tree[a])
proc t <- node[a] l x r =
  t.node ; send t l ; send t x ; t <-> r
\end{verbatim}

Owing to the multiple channels stored at each node, these trees do not exist in a serial form.
We can, however, use a different type to represent serialized trees:
\begin{verbatim}
type STree[a][K] = +{ nd : STree[a][a * STree[a][K]] , lf : K }
\end{verbatim}
A serialized tree is a stream of node and leaf labels, \verb`nd` and \verb`lf`, parameterized by a continuation type \verb`K`.
Like \verb`add` in the expression server, the label \verb`nd` continues with type \verb`STree[a][a * STree[K]]`: the label \verb`nd` is followed by the serialized left subtree, which itself continues by sending the data stored at the internal node and then the serialized right subtree, which continues with type \verb`K`.\footnote{The presence of \texttt{a *} means that this is not a true serialization because it sends a separate channel along which the data of type \texttt{a} is emitted.
But there is no uniform mechanism for serializing polymorphic data, so this is as close to a true serialization as possible.
Concrete instances of type \texttt{Tree} with, say, data of base type \texttt{int} could be given a true serialization by ``inlining'' the data of type \texttt{int} in the serialization.}

Using these types, it is relatively straightforward to implement processes that serialize and deserialize such trees.
The process \verb`serialize` can be declared with:
\begin{verbatim}
decl serialize[a][K] : (t : Tree[a]) (k : K) |- (s : STree[a][K])
\end{verbatim}
This process uses channels \verb`t` and \verb`k` that hold the tree and continuation, and offers that tree's serialization along channel \verb`s`.
The complete code for \verb`serialize` (and a helper process) is:
\begin{verbatim}
decl output[a][b] : (y : a) (x' : b) |- (x : a * b)
proc x <- output[a][b] y x' =
  send x y ; x <-> x'

decl serialize[a][K] : (t : Tree[a]) (k : K) |- (s : STree[a][K])
proc s <- serialize[a][K] t k =
  case t (
    leaf =>             % (t:1) (k:K) |- (s:STree[a][K])
      s.lf ;            % (t:1) (k:K) |- (s:K)
      wait t ; s <-> k
  | node =>
        % (t : Tree[a]*a*Tree[a]) (k:K) |- (s:STree[a][K])
      l <- recv t ;
        % (l:Tree[a]) (t : a*Tree[a]) (k:K) |- (s:STree[a][K])
      x <- recv t ;
        % (l:Tree[a]) (x:a) (t:Tree[a]) (k:K) |- (s:STree[a][K])
      sr <- serialize[a][K] t k ;
        % (l:Tree[a]) (x:a) (sr:STree[a][K]) |- (s:STree[a][K])
      sx <- output[a][STree[a][K]] x sr ;
        % (l:Tree[a]) (sx : a*STree[a][K]) |- (s:STree[a][K])
      s.nd ;
        % (l:Tree[a]) (sx : a*STree[a][K]) |- (s:STree[a][K])
      s <- serialize[a][a * STree[a][K]] l sx )
\end{verbatim}
If the tree is only a leaf, then the process forwards to the continuation.
Otherwise, if the tree begins with a node, then the serialization begins with \verb`nd`.
A recursive call to \verb`serialize` serves to serialize the right subtree with the given continuation.
A subsequent recursive call serializes the left subtree with the data together with the right subtree's serialization as the new continuation.

It is also possible to implement a process for deserializing trees:
\begin{verbatim}
decl deserialize[a][K] : (s : STree[a][K]) |- (tk : Tree[a] * K)
proc tk <- deserialize[a][K] s =
  case s (
    lf =>
        % (s : K) |- (tk : Tree[a] * K)
      t <- leaf[a] ;
        % (t : Tree[a]) (s : K) |- (tk : Tree[a] * K)
      send tk t ;
        % (s : K) |- (tk : K)
      tk <-> s
  | nd =>
        % (s : STree[a][a * STree[a][K]]) |- (tk : Tree[a] * K)
      lk <- deserialize[a][a * STree[a][K]] s ;
        % (lk : Tree[a] * (a * STree[a][K])) |- (tk : Tree[a] * K)
      l <- recv lk ;
        % (l:Tree[a]) (lk : a * STree[a][K]) |- (tk : Tree[a] * K)
      x <- recv lk ;
        % (l:Tree[a]) (x:a) (lk:STree[a][K]) |- (tk : Tree[a] * K)
      rk <- deserialize[a][K] lk ;
        % (l:Tree[a]) (x:a) (rk : Tree[a] * K) |- (tk : Tree[a] * K)
      r <- recv rk ;
        % (l:Tree[a]) (x:a) (r:Tree[a]) (rk:K) |- (tk : Tree[a] * K)
      t <- node[a] l x r ;
        % (t:Tree[a]) (rk:K) |- (tk : Tree[a] * K)
      send tk t ;
        % (rk:K) |- (tk : K)
      tk <-> rk )
\end{verbatim}

% \begin{verbatim}
% decl serialize[a][K] : (t : Tree[a]) (k : K) |- (s : STree[a][K])
% proc s <- serialize[a][K] t k =
%   case t (
%     node =>          % (t : Tree[a] * a * Tree[a] * 1) (k:K) |- (s : STree[a][K])
%       s.nd ;         % (t : Tree[a] * a * Tree[a] * 1) (k:K) |- (s : STree[a * STree[K]])
%       l <- recv t ;
%       x <- recv t ;
%       r <- recv t ;
%       wait t ;       % (l:Tree[a]) (x:a) (r:Tree[a]) (k:K) |- (s : STree[a * STree[K]])
%       sr <- serialize[a][K] r k ;     % (l:Tree[a]) (x:a) (sr:STree[K]) |- (s : STree[a * STree[K]])
%       sxr <- pair[a][STree[K]] x sr ; % (l:Tree[a]) (sxr : a * STree[K]) |- (s : STree[a * STree[K]])
%       s <- serialize[a][a * STree[K]] l sxr
%   | leaf =>             % (t : 1) (k : K) |- (s : STree[K])
%       s.lf ;            % (t : 1) (k : K) |- (s : K)
%       wait t ; s <-> k
%   )
% \end{verbatim}

% \begin{verbatim}
% decl deserialize[K] : (s : STree[K]) |- (tk : tree * K)
% proc tk <- deserialize[K] s =
%   case s (
%     nd =>                              % (s : STree[STree[K]]) |- (tk : tree * K)
%       ls' <- deserialize[STree[K]] s ; % (ls' : tree * STree[K]) |- (tk : tree * K)
%       l <- recv ls' ;                  % (l:tree) (ls':STree[K]) |- (tk : tree * K)
%       rk <- deserialize[K] ls' ;       % (l:tree) (rk : tree * K) |- (tk : tree * K)
%       r <- recv rk ;                   % (l:tree) (r:tree) (rk:K) |- (tk : tree * K)
%       t <- node l r ;                  % (t:tree) (rk:K) |- (tk : tree * K)
%       send tk t ; tk <-> rk
%   | lf =>                              % (s : K) |- (tk : tree * K)
%       t <- leaf ;                      % (t : tree) (s : K) |- (tk : tree * K)
%       send tk t ; tk <-> s
%   )
% \end{verbatim}

\paragraph{\textbf{Generalized tries for binary trees}}

Using nested types in Haskell, prior work~\cite{Hinze10jfp} describes an implementation of generalized tries that represent mappings on binary trees.
Our type system is expressive enough to represent such generalized tries.
We can reuse the type \verb`Tree[a]` of binary trees given above.
The type \verb`Trie[a][b]` describes tries that represent mappings from \verb`Tree[a]` to type \verb`b`:
\begin{verbatim}
type Trie[a][b] = &{ lookup_leaf : b ,
                     lookup_node : Trie[a][a -o Trie[a][b]] }
\end{verbatim}

A process for looking up a tree in such tries can be declared by:
\begin{verbatim}
decl lookup_tree[a][b] : (m : Trie[a][b]) (t : Tree[a]) |- (v : b)
\end{verbatim}
To lookup a tree in a trie, first determine whether that tree is a \verb`leaf` or a \verb`node`.
If the tree is a \verb`leaf`, then sending \verb`lookup_leaf` to the trie will return the value of type \verb`b` associated with that tree in the trie.

Otherwise, if the tree is a \verb`node`, then sending \verb`lookup_node` to the trie results in a trie of type \verb`Trie[a][a -o Trie[a][b]]` that represents a mapping from left subtrees to type \verb`a -o Trie[a][b]`.
We then lookup the left subtree in this trie, resulting in a process of type \verb`a -o Trie[a][b]` to which we send the data stored at our original tree's root.  That results in a trie of type \verb`Trie[a][b]` that represents a mapping from right subtrees to type \verb`b`.
Therefore, we finally lookup the right subtree in this new trie and obtain a result of type \verb`b`, as desired.

We can define a process that constructs a trie from a function on trees:
\begin{verbatim}
decl build_trie[a][b] : (f : Tree[a] -o b) |- (m : Trie[a][b])
\end{verbatim}
Both \verb`lookup_tree` and \verb`build_trie` can be seen as analogues to \verb`deserialize` and \verb`serialize`, respectively, converting a lower-level representation to a higher-level representation and vice versa.
These types and declarations mean that tries represent total mappings; partial mappings are also possible, at the expense of some additional complexity.

All our examples have been implemented and type checked in the open-source Rast
repository~\cite{RastBitBucket}. We have also further implemented the standard
polymorphic data structures such as lists, stacks and queues.
%%% Local Variables:
%%% TeX-master: "esop21"
%%% End:

% \vspace{-1em}
\section{Further Related Work}

To our knowledge, our work is the first proposal of polymorphic
recursion using nested type definitions in session
types. Thiemann and Vasconcelos~\cite{Thiemann16icfp} use polymorphic recursion to update the
channel between successive recursive calls but do not allow type
constructors or nested types. An algorithm to check type
equivalence for the non-polymorphic fragment of context-free session
types has been proposed by Almeida et al.~\cite{Almeida20tacas}.

Other forms of polymorphic session types have also been considered in the
literature.  Gay~\cite{Gay08} studies bounded polymorphism associated
with branch and choice types in the presence of subtyping.  He
mentions recursive types (which are used in some examples) as future
work, but does not mention parametric type definitions or nested
types. Bono and Padovani~\cite{Bono11,Bono12} propose (bounded) polymorphism to type
the endpoints in copyless message-passing programs inspired by session
types, but they do not have nested types. Following Kobayashi's
approach~\cite{Kobayashi03Book}, Dardha et al.~\cite{DardhaGS17} provide
an encoding of session types relying on linear and variant types and
present an extension to enable parametric and bounded polymorphism (to
which recursive types were added separately~\cite{Dardha14beat}) but
not parametric type definitions nor nested
types. Caires et al.~\cite{PolyESOP13} and Perez et al.~\cite{Perez14ic} 
provide behavioral polymorphism
and a relational parametricity principle for session types, but
without recursive types or type constructors.

Nested session types bear important similarities with
first-order cyclic terms, as observed by
Jan\v{c}ar. Jan\v{c}ar~\cite{Jancar10} proves that the trace equivalence problem
of first-order grammars is decidable, following the original ideas by
Stirling for the language equality problem in deterministic pushdown
automata~\cite{Stirling01tcs}. These ideas were also reformulated
by S\'enizergues~\cite{Senizergues02}. 
Henry and S\'enizergues~\cite{henry2013lalblc} proposed the only
practical algorithm to decide the language equivalence problem on
deterministic pushdown automata that we are aware of. Preliminary 
experiments show that
such a generic implementation, even if complete in theory, is a poor
match for the demands made by our type checker.

% \vspace{-1em}
\section{Conclusion}

\Polymorphicsessiontypes\ extend binary session types with parameterized type 
definitions. This extension enables us to express 
polymorphic data structures just as naturally as in functional languages.  
The proposed types are able to capture sequences of communication actions
described by deterministic context-free languages recognized by deterministic
pushdown  automata with several states, that accept by empty stack or by final
state. In this setting, we show that type equality is decidable.
%but subtyping is undecidable. 
To offset the complexity of type equality, we give
a practical type equality algorithm that is sound, efficient, but incomplete.

In the future, we are planning to explore subtyping for nested types.
In particular, since the language inclusion problem for simple
languages~\cite{Friedman76} is undecidable, we believe subtyping can
be reduced to inclusion and would also be undecidable. Despite this
negative result, it would be interesting to design an
algorithm to approximate subtyping. That would significantly increase the
programs that can be type checked in the system. In another direction, 
since Rast~\cite{Das20FSCD}
supports arithmetic refinements for lightweight verification, it would be
interesting to explore how refinements interact with polymorphic type
parameters, namely in the presence of subtyping. We would also like
to explore examples where the current type equality is not adequate.
Finally, protocols in distributed algorithms such as consensus or leader election (Raft, Paxos, etc.)\ depend on unbounded memory and cannot usually be expressed with finite control structure.
In future work, we would like to see if these protocols can be expressed with nested session types.

\bibliographystyle{splncs04}
\bibliography{refs}

\begin{thebibliography}{10}
\providecommand{\url}[1]{\texttt{#1}}
\providecommand{\urlprefix}{URL }
\providecommand{\doi}[1]{https://doi.org/#1}

\bibitem{Almeida20tacas}
Almeida, B., Mordido, A., Vasconcelos, V.T.: Deciding the bisimilarity of
  context-free session types. In: Biere, A., Parker, D. (eds.) 16th
  International Conference on Tools and Algorithms for the Construction and
  Analysis of Systems (TACAS 2020). pp. 39--56. Springer LNCS 12079, Dublin,
  Ireland (Apr 2020)

\bibitem{Bergstra89}
Bergstra, J.A., Klop, J.W.: Acp$\tau$ a universal axiom system for process
  specification. In: Wirsing, M., Bergstra, J.A. (eds.) Algebraic Methods:
  Theory, Tools and Applications. pp. 445--463. Springer Berlin Heidelberg,
  Berlin, Heidelberg (1989)

\bibitem{BirdM98}
Bird, R.S., Meertens, L.G.L.T.: Nested datatypes. In: Jeuring, J. (ed.)
  Mathematics of Program Construction, MPC'98, Marstrand, Sweden, June 15-17,
  1998, Proceedings. Lecture Notes in Computer Science, vol.~1422, pp. 52--67.
  Springer (1998). \doi{10.1007/BFb0054285}

\bibitem{Bono11}
Bono, V., Padovani, L.: Polymorphic endpoint types for copyless message
  passing. In: Silva, A., Bliudze, S., Bruni, R., Carbone, M. (eds.)
  Proceedings Fourth Interaction and Concurrency Experience, {ICE} 2011,
  Reykjavik, Iceland, 9th June 2011. {EPTCS}, vol.~59, pp. 52--67 (2011).
  \doi{10.4204/EPTCS.59.5}

\bibitem{Bono12}
Bono, V., Padovani, L.: Typing copyless message passing. Log. Methods Comput.
  Sci.  \textbf{8}(1) (2012). \doi{10.2168/LMCS-8(1:17)2012}

\bibitem{PolyESOP13}
Caires, L., P{\'e}rez, J.A., Pfenning, F., Toninho, B.: Behavioral polymorphism
  and parametricity in session-based communication. In: Felleisen, M., Gardner,
  P. (eds.) Programming Languages and Systems. pp. 330--349. Springer Berlin
  Heidelberg, Berlin, Heidelberg (2013)

\bibitem{Caires10concur}
Caires, L., Pfenning, F.: Session types as intuitionistic linear propositions.
  In: P.Gastin, F.Laroussinie (eds.) Proceedings of the 21st International
  Conference on Concurrency Theory (CONCUR 2010). pp. 222--236. Springer LNCS
  6269, Paris, France (Aug 2010)

\bibitem{Caires16mscs}
Caires, L., Pfenning, F., Toninho, B.: Linear logic propositions as session
  types. Mathematical Structures in Computer Science  \textbf{760} (11 2014)

\bibitem{Cervesato09SEM}
Cervesato, I., Scedrov, A.: Relating state-based and process-based concurrency
  through linear logic (full-version). Information and Computation
  \textbf{207}(10),  1044 -- 1077 (2009). \doi{10.1016/j.ic.2008.11.006},
  special issue: 13th Workshop on Logic, Language, Information and Computation
  (WoLLIC 2006)

\bibitem{Connelly95mscs}
Connelly, R.H., Morris, F.L.: A generalisation of the trie data structure.
  Mathematical Structures in Computer Science  \textbf{5}(3),  381--418 (1995)

\bibitem{Dardha14beat}
Dardha, O.: Recursive session types revisited. In: Carbone, M. (ed.) Third
  Workshop on Behavioural Types (BEAT 2014). pp. 27--34. EPTCS 162 (Sep 2014)

\bibitem{DardhaGS17}
Dardha, O., Giachino, E., Sangiorgi, D.: Session types revisited. Inf. Comput.
  \textbf{256},  253--286 (2017). \doi{10.1016/j.ic.2017.06.002}

\bibitem{RastBitBucket}
Das, A., Derakhshan, F., Pfenning, F.: Rast implementation.
  \url{https://bitbucket.org/fpfenning/rast/src/master/} (2019), accessed:
  2019-11-11

\bibitem{Das18Temporal}
Das, A., Hoffmann, J., Pfenning, F.: Parallel complexity analysis with temporal
  session types. Proc. ACM Program. Lang.  \textbf{2}(ICFP),  91:1--91:30 (Jul
  2018). \doi{10.1145/3236786}

\bibitem{Das18RAST}
Das, A., Hoffmann, J., Pfenning, F.: Work analysis with resource-aware session
  types. In: Proceedings of the 33rd Annual ACM/IEEE Symposium on Logic in
  Computer Science. pp. 305--314. LICS '18, ACM, New York, NY, USA (2018).
  \doi{10.1145/3209108.3209146}

\bibitem{Das20FSCD}
Das, A., Pfenning, F.: {Rast: Resource-Aware Session Types with Arithmetic
  Refinements (System Description)}. In: Ariola, Z.M. (ed.) 5th International
  Conference on Formal Structures for Computation and Deduction (FSCD 2020).
  Leibniz International Proceedings in Informatics (LIPIcs), vol.~167, pp.
  33:1--33:17. Schloss Dagstuhl--Leibniz-Zentrum f{\"u}r Informatik, Dagstuhl,
  Germany (2020). \doi{10.4230/LIPIcs.FSCD.2020.33}

\bibitem{Das20CONCUR}
Das, A., Pfenning, F.: {Session Types with Arithmetic Refinements}. In: Konnov,
  I., Kov{\'a}cs, L. (eds.) 31st International Conference on Concurrency Theory
  (CONCUR 2020). Leibniz International Proceedings in Informatics (LIPIcs),
  vol.~171, pp. 13:1--13:18. Schloss Dagstuhl--Leibniz-Zentrum f{\"u}r
  Informatik, Dagstuhl, Germany (2020). \doi{10.4230/LIPIcs.CONCUR.2020.13}

\bibitem{Das20PPDP}
Das, A., Pfenning, F.: Verified linear session-typed concurrent programming.
  In: 22nd International Symposium on Principles and Practice of Declarative
  Programming. PPDP '20, Association for Computing Machinery, New York, NY, USA
  (2020). \doi{10.1145/3414080.3414087}

\bibitem{Derakhshan19corr}
{Derakhshan}, F., {Pfenning}, F.: {Circular Proofs as Session-Typed Processes:
  A Local Validity Condition}. arXiv e-prints arXiv:1908.01909 (Aug 2019)

\bibitem{Dyck1882}
Dyck: Gruppentheoretische studien. (mit drei lithographirten tafeln.).
  Mathematische Annalen  \textbf{20},  1--44 (1882),
  \url{http://eudml.org/doc/157013}

\bibitem{Friedman76}
Friedman, E.P.: The inclusion problem for simple languages. Theor. Comput. Sci.
   \textbf{1}(4),  297--316 (1976). \doi{10.1016/0304-3975(76)90074-8}

\bibitem{Gay2005}
Gay, S., Hole, M.: Subtyping for session types in the pi calculus. Acta
  Informatica  \textbf{42}(2),  191--225 (Nov 2005).
  \doi{10.1007/s00236-005-0177-z}

\bibitem{Gay08}
Gay, S.J.: Bounded polymorphism in session types. Math. Struct. Comput. Sci.
  \textbf{18}(5),  895--930 (2008). \doi{10.1017/S0960129508006944}

\bibitem{Girard87tapsoft}
Girard, J.Y., Lafont, Y.: Linear logic and lazy computation. In: Ehrig, H.,
  Kowalski, R., Levi, G., Montanari, U. (eds.) TAPSOFT '87. pp. 52--66.
  Springer Berlin Heidelberg, Berlin, Heidelberg (1987)

\bibitem{Griffith16phd}
Griffith, D.: Polarized Substructural Session Types. Ph.D. thesis, University
  of Illinois at Urbana-Champaign (Apr 2016)

\bibitem{henry2013lalblc}
Henry, P., S{\'e}nizergues, G.: Lalblc a program testing the equivalence of
  dpda’s. In: International Conference on Implementation and Application of
  Automata. pp. 169--180. Springer (2013)

\bibitem{Hinze10jfp}
Hinze, R.: Generalizing generalized tries. Journal of Functional Pogramming
  \textbf{10}(4),  327--351 (Jul 2010)

\bibitem{Honda93CONCUR}
Honda, K.: Types for dyadic interaction. In: Best, E. (ed.) CONCUR'93. pp.
  509--523. Springer Berlin Heidelberg, Berlin, Heidelberg (1993)

\bibitem{Jancar10}
Jan{\v{c}}ar, P.: Short decidability proof for {DPDA} language equivalence via
  1st order grammar bisimilarity. CoRR  \textbf{abs/1010.4760} (2010),
  \url{http://arxiv.org/abs/1010.4760}

\bibitem{jancar2012bisimilarity}
Jancar, P.: Bisimilarity on basic process algebra is in 2-exptime (an explicit
  proof). arXiv preprint arXiv:1207.2479  (2012)

\bibitem{Johann09hosc}
Johann, P., Ghani, N.: Haskell programming with nested types: A principled
  approach. Higher-Order and Symbolic Computation  \textbf{22}(2),  155--189
  (Jun 2009)

\bibitem{Kobayashi03Book}
Kobayashi, N.: Type systems for concurrent programs. In: Aichernig, B.K.,
  Maibaum, T.S.E. (eds.) Formal Methods at the Crossroads. From Panacea to
  Foundational Support, 10th Anniversary Colloquium of UNU/IIST, the
  International Institute for Software Technology of The United Nations
  University, Lisbon, Portugal, March 18-20, 2002, Revised Papers. Lecture
  Notes in Computer Science, vol.~2757, pp. 439--453. Springer (2002).
  \doi{10.1007/978-3-540-40007-3\_26}

\bibitem{korenjak1966simple}
Korenjak, A.J., Hopcroft, J.E.: Simple deterministic languages. In: 7th Annual
  Symposium on Switching and Automata Theory (swat 1966). pp. 36--46. IEEE
  (1966)

\bibitem{Lindley16icfp}
Lindley, S., Morris, J.G.: Talking bananas: Structural recursion for session
  types. In: Proceedings of the 21st ACM SIGPLAN International Conference on
  Functional Programming. p. 434–447. ICFP 2016, Association for Computing
  Machinery, New York, NY, USA (2016). \doi{10.1145/2951913.2951921}

\bibitem{Mycroft84ISP}
Mycroft, A.: Polymorphic type schemes and recursive definitions. In: Paul, M.,
  Robinet, B. (eds.) International Symposium on Programming. pp. 217--228.
  Springer Berlin Heidelberg, Berlin, Heidelberg (1984)

\bibitem{Okasaki96phd}
Okasaki, C.: Purely Functional Data Structures. Ph.D. thesis, Department of
  Computer Science, Carnegie Mellon University (1996)

\bibitem{Perez14ic}
P\'erez, J.A., Caires, L., Pfenning, F., Toninho, B.: Linear logical relations
  and observational equivalences for session-based concurrency. Information and
  Computation  \textbf{239},  254--302 (2014)

\bibitem{Pfenning15fossacs}
Pfenning, F., Griffith, D.: Polarized substructural session types. In: Pitts,
  A. (ed.) Foundations of Software Science and Computation Structures. pp.
  3--22. Springer Berlin Heidelberg, Berlin, Heidelberg (2015)

\bibitem{Pierce00TOPLAS}
Pierce, B.C., Turner, D.N.: Local type inference. ACM Trans. Program. Lang.
  Syst.  \textbf{22}(1),  1--44 (Jan 2000). \doi{10.1145/345099.345100}

\bibitem{Senizergues02}
S{\'{e}}nizergues, G.: L(a)=l(b)? {A} simplified decidability proof. Theor.
  Comput. Sci.  \textbf{281}(1-2),  555--608 (2002).
  \doi{10.1016/S0304-3975(02)00027-0}

\bibitem{Solomon78popl}
Solomon, M.: Type definitions with parameters. In: Proceedings of the 5th ACM
  SIGACT-SIGPLAN symposium on Principles of programming languages. pp. 31--38
  (1978)

\bibitem{Stirling01tcs}
Stirling, C.: Decidability of {DPDA} equivalence. Theor. Comput. Sci.
  \textbf{255}(1-2),  1--31 (2001). \doi{10.1016/S0304-3975(00)00389-3}

\bibitem{takeuchi1994interaction}
Takeuchi, K., Honda, K., Kubo, M.: An interaction-based language and its typing
  system. In: International Conference on Parallel Architectures and Languages
  Europe. pp. 398--413. Springer (1994)

\bibitem{Thiemann16icfp}
Thiemann, P., Vasconcelos, V.T.: Context-free session types. In: Proceedings of
  the 21st International Conference on Functional Programming (ICFP 2016). pp.
  462--475. ACM, Nara, Japan (Sep 2016)

\bibitem{Thiemann20popl}
Thiemann, P., Vasconcelos, V.T.: Label-dependent session types. In: Birkedal,
  L. (ed.) Proceedings of the Symposium on Programming Languages (POPL 2020).
  pp. 67:1--67:29. ACM Proceedings on Programming Languages 4, New Orleans,
  Louisiana, USA (Jan 2020)

\bibitem{Wadler12icfp}
Wadler, P.: Propositions as sessions. In: Proceedings of the 17th International
  Conference on Functional Programming (ICFP 2012). pp. 273--286. ACM Press,
  Copenhagen, Denmark (Sep 2012)

\end{thebibliography}

\end{document}